\documentclass[a4paper]{article}[11pt]

\usepackage{amsmath}
\usepackage{amssymb}
\usepackage{amsthm}
\usepackage{graphicx}
\usepackage[colorinlistoftodos]{todonotes}
\usepackage{algorithm}
\usepackage[noend]{algpseudocode}
\usepackage{enumerate}
\usepackage{xcolor}
\usepackage{framed}
\usepackage{bm}
\usepackage{pifont}
\usepackage{multicol}
\usepackage{lipsum}
\usepackage{tikz}
\usetikzlibrary{calc}
\usepackage{wrapfig}
\usepackage{array}
\usepackage{relsize}
\usepackage{comment}
\usepackage{url}
\usepackage[title]{appendix}
\usepackage{caption}
\usepackage{subcaption}
\usepackage{xstring}
\usepackage{times}

\usepackage[english]{babel}
\usepackage[utf8x]{inputenc}
\usepackage[T1]{fontenc}

\usepackage[a4paper,top=3cm,bottom=2cm,left=3cm,right=3cm,marginparwidth=1.75cm]{geometry}

\usepackage[colorlinks=true, allcolors=blue]{hyperref}

\newtheorem{observation}{Observation}

\newtheorem{lemma}{Lemma}
\newtheorem{corollary}[lemma]{Corollary}

\newtheorem{definition}{Definition}

\newtheorem{theorem}{Theorem}

\algdef{SE}[RECEIVING]{Receiving}{EndReceiving}[1]{\textbf{upon
receiving}\ #1\ \algorithmicdo}{\algorithmicend\ \textbf{}}%
\algtext*{EndReceiving}

\algdef{SE}[UPON]{Upon}{EndUpon}[1]{\textbf{upon
}\ #1\ \algorithmicdo}{\algorithmicend\ \textbf{}}%
\algtext*{EndUpon}


\newcommand{\BCname}{4-Stage-f+1-Provable-Broadcast}
\newcommand{\SBCname}{f+1-Provable-Broadcast}

\newcommand{\SBC}[2]{\emph{f+1-PB}(#1,#2)}

\newcommand{\BC}[2]{4S-f+1-PB-broadcast(#1,#2)}

\newcommand{\lr}[1]{\langle #1 \rangle}

\newcommand{\ignore}[1]{}

\title{Validated Asynchronous Byzantine Agreement with Optimal
Resilience and Asymptotically Optimal Time and Word Communication} 
\author{Ittai Abraham\\
VMware Research\\
iabraham@vmware.com
\and Dahlia Malkhi\\
VMware Research\\
dmalkhi@vmware.com
\and Alexander Spiegelman\\
VMware Research\\
spiegelmans@vmware.com}


\date{}

\begin{document}
\maketitle

\begin{abstract}
We provide a new protocol for Validated Asynchronous Byzantine
Agreement. Validated (multi-valued) Asynchronous Byzantine Agreement is a key building block in constructing Atomic Broadcast and fault-tolerant state machine replication in the asynchronous setting.
Our protocol can withstand the optimal number $f<n/3$ of Byzantine failures and reaches agreement in the asymptotically optimal expected $O(1)$ running time.
Honest parties in our protocol send only an expected $O(n^2)$
messages where each message contains a value and a constant number of
signatures.
 Hence our total expected communication is $O(n^2)$ words.
The best previous result of Cachin et al. from 2001 solves Validated Byzantine Agreement with optimal resilience and $O(1)$ expected time but with $O(n^3)$ expected word communication.
Our work addresses an open question of Cachin et al. from 2001 and improves the expected word communication from $O(n^3)$ to the asymptotically optimal $O(n^2)$.

\end{abstract}

\section{Introduction}
\label{sec:intro}


Byzantine agreement is a fundamental problem in computer science introduced by Pease, Shostak and Lamport \cite{PSL80} in 1980. 
Bracha \cite{B87} shows that even strictly weaker primitives than Asynchronous Byzantine agreement can
only be solved when the number of parties $n$ is larger than $3f$ where $f$ is the maximum number of  parties the adversary can corrupt. We therefore say that a solution has \textit{optimal resilience} if it solves Byzantine agreement for $n=3f+1$.
A theorem of Fischer, Lynch and Paterson~\cite{FLP85} states that any
protocol solving Asynchronous Agreement must have a non-terminating execution even in the face of  a single (benign) failure. Ben-Or~\cite{Ben-Or} shows that randomization can be used to make such non-terminating executions become events with probability 0.
Feldman and Micali \cite{FM88} show that Asynchronous Byzantine Agreement can be
solved with optimal resilience $n=3f+1$ and with an expected $O(1)$
asynchronous running time (where running time is the maximum duration
as defined by Canetti and Rabin \cite{CR93} and is essentially the
number of steps when the protocol is embedded into a lock-step timing
model).
We therefore say that a solution has \textit{asymptotically optimal
time} if it solves Byzantine agreement using an expected $O(1)$
running time.
We show that a recent lower bound of Abraham et al. \cite{ADDNR18} implies that any protocol solving Asynchronous Byzantine Agreement aganst an adaptive adversary (and without a constant error probability) must have the honest parties send expected $\Omega(n^2)$ messages (see Appendix \ref{app:LB}). We therefore say that a solution has \textit{asymptotically optimal word communication} if it solves Byzantine agreement using an expected $O(n^2)$ messages and each message contains just a single \textit{word} where we assume a word contains a constant number of signatures and domain values. 
Renewed interest in Byzantine Agreement follows from the need to
implement Atomic Broadcast and fault tolerant state machine
replication in the asynchronous
setting~\cite{CV17,HoneyBadger,beat18}. In 2001, Cachin, Kursawe,
Petzold, and Shoup~\cite{CachinSecure} defined the problems of
Atomic Broadcast and Validated Byzantine Agreement to address these types of applications.  \emph{Validated Byzantine Agreement} guarantees a decision on some party's input satisfying a globally verifiable \emph{external validity} condition.  Cachin et al.~\cite{CachinSecure} show how to obtain Validated Asynchronous Byzantine Agreement (VABA) with optimal resilience, asymptotically optimal time and $O(n^3)$ expected word communication. Improving the expected word communication for VABA from $O(n^3)$ to the $O(n^2)$ is an open problem stated in \cite{CachinSecure} and has been open for almost 20 years.

This paper presents the first VABA solution with optimal resilience, asymptotically optimal time
whose expected word communication is $O(n^2)$, thus closing this gap.
%
More precisely, we prove the following theorem:

\begin{theorem}
There exists a protocol among $n$ parties that solves except with negligible probability Validated
Asynchronous Byzantine Agreement (VABA), secure against
an adaptive adversary that controls up to $f < n/3$ parties,
with expected $O(n^2)$ word communication and expected constant running time.
\end{theorem}

\paragraph{Background.}

In Byzantine Agreement there are $n$ parties each of which has an input value, at most $f < n$ are corrupted (i.e., controlled by an
adversary), and the goal of the honest parties is to decide on a
unique value.
Many models with various assumptions have
been proposed in the literature. 
Some consider asynchronous communication whereas others
rely on synchrony. 
Some restrict the adversary's computational power in order to use
cryptographic tools whereas others assume an information theoretic
model, and some assume authenticated or private communication channels
whereas others deal with possible forgeability.
In this paper, 
we assume the practical random oracle model~\cite{bellare1993random,
fiat1986prove, Cachin2000RandomOI}. 
Namely, an environment with authenticated but asynchronous
communication channels and a computationally bounded adversary.

A  simplified version of the agreement
problem is the \textit{binary agreement problem} in which the inputs of the parties are restricted to the set $\{0,1\}$.
A fundamental work by Cachin, Kursawe, and Shoup~\cite{Cachin2000RandomOI} was the
first to give an optimal algorithm in terms of resilience and
word communication in the random oracle model, which they
formalized to fit the distributed settings.
In particular, the algorithm withstands up to $f < n/3$ Byzantine
failures, runs in constant expected number of asynchronous views
(rounds), and the expected communication cost is $O(n^2)$ messages of
the size of one or two RSA signatures~\cite{shoup2000practical}.
A more recent work by Most{\'e}faoui et
al.~\cite{mostefaoui2015signature} shows how to achieve the same
optimal result without any cryptographic assumptions besides the
existence of a common random coin\footnote{while the construction of
\cite{mostefaoui2015signature} requires only $O(n^2)$ bits
given a common coin, the word
communication of the resulting binary Byzantine Agreement protocol is
dominated by the common random coin protocol that requires threshold
signatures and $O(n^2)$ word communication.}.

As for the multi-valued Byzantine agreement, the original 
problem specification due to Lamport et al.~\cite{PSL80}
was motivated by the
following setting: Four computers in control of a space-shuttle cockpit need to
reach agreement on a sensor reading, despite one being potentially faulty. The problem was captured via a \emph{Weak
Validity}~\cite{dolev1983authenticated} condition as follows:
\begin{definition}[Weak validity]

If all honest parties propose $v$, then every honest party that
terminates decides~$v$.

\end{definition}
\noindent Note that while the Weak Validity condition is well defined, it
says nothing about a situation in which parties propose different
values, allowing them to (1) return some
default value $\bot$ that indicates that no agreement was reached or
(2) agree on a value proposed by a corrupted party.
Most{\'e}faoui et al.~\cite{Raynalmultivalued} consider a slightly
stronger property in which only a value proposed by an honest party
or $\bot$ are allowed to be returned.
However, honest parties may still decide $\bot$ if they initially disagree.  In particular, it is not clear how this slightly stronger validity property can be used to solve Atomic Broadcast \cite{CachinSecure}.

Cachin et al. formulated in~\cite{CachinSecure} a problem specification that captures the practical settings where parties propose updates to a replicated state. Agreement is formed on a sequence of updates, hence a
non-default decision is needed in order to make progress.  To prevent updates from rogue parties, 
the model is extended with an \emph{External Validity} predicate 
as follows:
\begin{definition}[External validity]

If an honest party decides on a value $v$,
then $v$ is externally valid.

\end{definition}

Mostéfaoui et al. presented in~\cite{Raynalmultivalued} a
signature-free deterministic reduction
from their binary agreement protocol~\cite{mostefaoui2015signature} that solved asynchronous Byzantine Agreement with Weak validity. It has optimal resilience and asymptotically optimal time and word communication. However, the weak validity  property seems to prohibit the usefulness as a building block for Atomic Broadcast or any State Machine Replication (SMR) protocol that should maintain liveness in an asynchronous environment.

Cachin et al. gave in~\cite{CachinSecure} a randomized reduction
from their binary agreement algorithm~\cite{Cachin2000RandomOI}
to VABA and also showed how to use it in order to implement an atomic
broadcast.
Their VABA protocol provides external validity, has optimal
resilience, asymptotically optimal time, and expected message
complexity $O(n^3)$. 
That paper explicitly mentions the open problem of improving the
expected word communication from $O(n^3)$ to $O(n^2)$.

\paragraph{Our Contribution.}

The main contribution of this paper is solving this open question. 
Just like \cite{CachinSecure}, our protocol solves Asynchronous
Byzantine agreement with external validity (VABA), has optimal
resilience and asymptotically optimal time. Improving on
\cite{CachinSecure}, our expected word communication is also
asymptotically optimal.
In particular, honest parties send a total expected $O(n^2)$
messages, which is optimal and each message is roughly the size of one
or two threshold signatures.

Our protocol is secure against an adaptive adversary. This follows from using adaptively secure threshold signatures of Libert et al. \cite{LJY16} and adaptively secure common coin protocol of Loss and Moran \cite{Julian}. 
Cachin et al. \cite{CachinSecure} note that their binary protocol \cite{Cachin2000RandomOI} and their Validated protocol  \cite{CachinSecure} also immediately generalize to be secure against adaptive adversaries by using the primitives above.  


\paragraph{Techniques and Challenges.}
Our protocols are based on the Random Oracle model and draw heavily from the framework of Cachin et al. \cite{Cachin2000RandomOI,CachinSecure}.
Unlike previous constrictions, our protocol does not go through a randomized binary
agreement black-box.  
Instead, much like Katz and Koo's Synchronous Byzantine agreement protocol
\cite{KK09}, in each view, we run $n$ parallel leader-based threads
and then use a random leader election primitive to decide which
leader is elected in hindsight. Just like \cite{KK09}, not all honest
parties reach agreement in the same view. To guarantee safety between
different views we use a view-change protocol that guarantees that
new leaders can propose only safe values.

Adopting this approach to get optimal word communication requires to
overcome several challenges. On the one hand, to obtain $O(n^2)$ word
communication per view, we need each of the $n$ leader-based thread
protocol to use just $O(n)$ words and we need the global view change
protocol to use $O(n^2)$ words. On the other hand, to guarantee
progress we make sure that our view change protocol will allow progress even in asynchronous settings.
To balance between frugal communication and liveness we adopt a four step protocol that is inspired by an approach taken in the 
partial-synchrony model by Yin et al.~\cite{HS}.

To obtain optimal $O(1)$ expected time, the next challenge is to
guarantee that when the first honest party enters the leader election
phase there is a constant fraction of potential leaders such that if
one of them is elected then all honest parties will decide in
constant time.
Moreover, if the elected leader did not complete its broadcast, we
need a mechanism to allow parties to abandon the elected leader's
broadcast before running the global view-change protocol.

\section{Model}
\label{sec:model}

In order to reason about distributed algorithms in cryptographic
settings we adopt the model defined in~\cite{Cachin2000RandomOI,
CachinSecure}.
We consider an asynchronous message passing system consisting of a set
$\Pi$ of $n$ parties, an adaptive adversary, and a trusted dealer.
The adversary may control up to $f < n/3$ parties during an
execution.
An adaptive adversary is not restricted to choose which parties to
corrupt at the beginning of an execution, but is free to corrupt (up to $f$)
parties on the fly.
Note that once a party is corrupted, it remains corrupted, and we
call it \emph{faulty}.
A party that is never corrupted in an execution is called
\emph{honest}.
We denote the set $\Pi_h \subseteq \Pi$ to be the set of all honest
parties.

%
We assume an initial setup before every execution in which the trusted
dealer generates the initial states of all parties, and we assume
that the adversary cannot obtain the states of honest parties at
any time during an execution.
In particular, the adversary cannot obtain the initial states of
honest parties.

\paragraph{Computation.}
Following~\cite{Cachin2000RandomOI,CachinSecure}, we use
standard modern cryptographic assumptions and definitions. 
We model the computations made by all
system components as probabilistic Turing machines, and bound the
number of computational basic steps allowed by the adversary
by a polynomial in a \emph{security parameter k}.
A function $\epsilon(k)$ is \emph{negligible} in $k$ if for all $c >
0$ there exists a $k_0$ s.t.\ $\epsilon(k) < 1/k^c$ for all $k > k_0$.
A computational problem is called \emph{infeasible} if any polynomial
time probabilistic algorithm solves it only with negligible
probability.
Note that by the definition of infeasible problems, the probability to
solve at least one such problem out of a polynomial in $k$ number of
problems is negligible.
Intuitively, this means that for any protocol $P$ that uses a
polynomial in $k$ number of infeasible problems, if $P$ is correct
provided that the adversary does not solve one of its infeasible
problems, then the protocol is correct except with negligible
probability.
We assume that the number of parties $n$ is bounded by a polynomial
in $k$.

\paragraph{Communication.}
We assume asynchronous links controlled by the adversary,
that is, the adversary can see all messages and decide when and what
messages to deliver.
In order to fit the communication model with the computational
assumptions, we restrict the adversary to perform no more than a
polynomial in $k$ number of computation steps between the time a
message $m$ from an honest party $p_i$ is sent to an honest party
$p_j$ and the time $m$ is delivered by $p_j$\footnote{Note that
although this restriction gives some upper bound on the communication
in terms of the adversary local speed, the model is still asynchronous
since speeds of different parties are completely unrelated.}.
In addition, for simplicity, we assume that messages are
\emph{authenticated} in a sense that if an honest party $p_i$
receives a message $m$ indicating that $m$ was sent by an honest
party $p_j$, then $m$ was indeed generated by $p_j$ and sent to $p_i$
at some prior time.
This assumption is reasonable since it can be easily implemented with
standard symmetric-key cryptographic techniques~\cite{Bellare} in our
model.

\paragraph{Termination.}
Note that the traditional definition of the liveness property in
distributed system, which requires that all correct (honest) parties
\textit{eventually} terminate provided that all messages between
correct (honest) parties eventually arrive, does not make sense in
this model.
This is because the traditional definition allows the following:
\begin{itemize}
  
  \item Unbounded delivery time between honest parties, which
  potentially gives the adversary unbounded time to solve infeasible
  problems.
  
  \item Unbounded runs that potentially may consist of an unbounded
  number of infeasible problems, and thus the probability that the
  adversary menages to solve one is not
  negligible.
  
\end{itemize}

\noindent Follwong Cachin et al. \cite{Cachin2000RandomOI,CachinSecure},
we address the first concern by restricting the number of
computation steps the adversary makes during message transmission among
honest parties.
So as long as the total number of messages in the protocol is
polynomial in $k$, the error probability remains negligible.
To deal with the second concern, we do not use a standard liveness
property in this paper, but instead we reason about the total number
of messages required for all honest parties to terminate.
We adopt the following definition
from~\cite{Cachin2000RandomOI,CachinSecure}:

\begin{definition}[Uniformly Bounded Statistic]

Let $X$ be a random variable.
We say that $X$ is \textit{probabilistically uniformly bounded} if
there exist a fixed polynomial $T(k)$ and a fixed negligible functions
$\delta(l)$ and $\epsilon(k)$  such that for all $l,k \geq 0$, 
\[
Pr[X > lT(k)] \leq \delta(l) + \epsilon(k)
\]  

\end{definition}

\noindent With the above definition Cachin et
al.~\cite{Cachin2000RandomOI, CachinSecure} define a progress
property that makes sense in the cryptographic settings:
\begin{itemize}
  
  \item \emph{Efficiency: The number of messages generated by the
  honest parties is probabilistically uniformly bounded}
  
\end{itemize} 
\noindent The efficiency property implies that the probability of the
adversary to solve an infeasible problem is negligible, which makes
it possible to reason about the correctness of the primitives'
properties.
However, note that this property can be trivially satisfied by
a protocol that never terminates but also never sends any messages.
Therefore, in order for a primitive to be meaningful in
this model, Cachin et al.~\cite{Cachin2000RandomOI, CachinSecure}
require another property:
\begin{itemize}
  
  \item \emph{Termination\footnote{Called liveness
  in~\cite{Cachin2000RandomOI}, but we find this name confusing since
  it is not a liveness~\cite{alpern1985liveness} property.}}:
  If all messages sent by honest parties have been delivered, then
  all honest parties terminated.

\end{itemize}
In this paper we consider both efficiency and termination properties
as defined in~\cite{Cachin2000RandomOI, CachinSecure}.
However, note that when considering an adaptive adversary, it is also
possible to define a slightly weaker termination property:

\begin{itemize}
  
  \item \emph{Weak termination}:
  If all massages sent by parties before they were corrupted have
  been delivered, then all honest parties terminated.

\end{itemize}
%
%
Note that while any protocol that satisfies termination satisfies
weak termination as well, a lower bound for 
termination does not apply for weak termination.
Indeed our lower bound (see Appendix ~\ref{app:LB}) is for protocols that obtain the termination property.
We leave the study of lower bounds for protocols with weak termination as an open question.

\paragraph{Complexity.}

We use the following standard complexity notions (see for example Cannetti and Rabin \cite{CR93}).
We measure the expected \emph{word communication} of our protocol as the
maximum over all inputs and applicable adversaries of the expected 
total number of \textit{words} sent by honest parties where
expectation is taken over the random inputs of the players and of the
 adversary. 
We assume a finite domain $\mathcal{V}$ of valid values for
the Byzantine agreement problem, and say that a word can contain a
constant number of signatures (see Section \ref{sub:crypto}) and
domain values.
We measure the expected \emph{running time} of our protocol as
the maximum over all inputs and applicable adversaries
of the expected \textit{duration} where expectation is taken over the
random inputs of the players and of the adversary. The duration of an
execution is the total time until all honest players have terminated
divided by the longest delay of a message in this execution. 
Essentially the duration of an execution is the number of steps taken
if this execution is re-run in lock-step model where each message
takes exactly one time step.

%
%


%

Following Cachin et al. \cite{CachinSecure}, in order to show that our view-based protocol runs in an
expected constant running time and has expected $O(n^2)$ word
communication, it is enough to show that:
\begin{itemize}
  
  \item every view consists of $R(k) =O(n^2)$ messages that consist
  of one word, and
  
  \item  the total number of messages is probabilistically uniformly
  bounded by $R$.
  
\end{itemize}

(Note that the number of parties $n$ is a polynomial in $k$, and thus,
so is $R$). This follows from the following Lemma:

\begin{lemma}[CKPS 01~\cite{CachinSecure}]
Given a probabilistically uniformly bounded by $T$ random
variable $X$, there is a constant $c$ s.t.\ the expected value of $X$
is bounded by $cT(k)$ + $\epsilon(k)$, where $\epsilon(k)$ is a
negligible function.
\end{lemma}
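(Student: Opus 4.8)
The plan is to use the standard tail-integral (\emph{layer-cake}) representation of the expectation together with the monotonicity of the tail probability, and to exploit the fact that in our model the statistic $X$ (e.g.\ the number of messages generated by honest parties) is bounded above by a fixed polynomial $p(k)$: the adversary runs in time polynomial in $k$, and honest parties act only in response to message deliveries, so no execution produces more than polynomially many messages. This a priori bound is essential. Without it the statement is false: a variable equal to $0$ with probability $1-\epsilon(k)$ and to $2^k$ with probability $\epsilon(k)=2^{-k/2}$ satisfies the hypothesis (its tails are at most $\epsilon(k)$) yet has super-polynomial expectation. Hence controlling the contribution of the $\epsilon(k)$ term is the crux of the argument.

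First I would write, for the nonnegative random variable $X$,
\[
E[X] = \int_0^\infty \Pr[X > t]\,dt = \sum_{l \ge 0}\int_{lT(k)}^{(l+1)T(k)}\Pr[X>t]\,dt .
\]
Since $t \mapsto \Pr[X>t]$ is non-increasing, on each block the integrand is at most $\Pr[X > lT(k)]$, so
\[
E[X] \;\le\; T(k)\sum_{l \ge 0}\Pr[X > lT(k)] .
\]

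Next I would truncate the sum. Because $X \le p(k)$ with probability $1$ and $T(k)\ge 1$, every term with $l \ge L := \lceil p(k)/T(k)\rceil$ vanishes, and $L \le p(k)+1$ is polynomial in $k$. Applying the hypothesis $\Pr[X>lT(k)] \le \delta(l)+\epsilon(k)$ to the surviving terms gives
\[
E[X] \;\le\; T(k)\sum_{l=0}^{L-1}\bigl(\delta(l)+\epsilon(k)\bigr) \;\le\; T(k)\Bigl(\textstyle\sum_{l\ge 0}\delta(l)\Bigr) + T(k)\,L\,\epsilon(k) .
\]

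Finally I would verify that each piece behaves as claimed. The series $c := \sum_{l\ge 0}\delta(l)$ is a finite constant independent of $k$: since $\delta$ is negligible, $\delta(l) < 1/l^2$ for all large $l$, so the sum converges. The remaining piece is the product of the polynomial $T(k)\,L \le T(k)\,(p(k)+1)$ with the negligible $\epsilon(k)$, hence is itself negligible; call it $\epsilon'(k)$. This yields $E[X] \le c\,T(k)+\epsilon'(k)$ with $c$ a constant and $\epsilon'$ negligible, as required. The only delicate point, and the main obstacle, is exactly this truncation: the naive bound $\sum_{l\ge 0}\epsilon(k)$ diverges, and it is the polynomial bound on the number of messages, together with the closure of negligible functions under multiplication by polynomials, that makes the $\epsilon(k)$-contribution negligible.
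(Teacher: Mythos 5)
Your proof is correct, but note that there is no proof in the paper to compare it against: the lemma is stated as an import from Cachin et al.\ (CKPS~01) and is used as a black box, so you have in effect supplied the missing argument. Judged on its own, your route is the standard one --- layer-cake representation of $E[X]$, monotonicity of the tail on each block $[lT(k),(l+1)T(k)]$, then truncation of the resulting sum --- and every step checks out. The genuinely valuable part of your write-up is the observation that the statement is \emph{false} for an arbitrary probabilistically uniformly bounded random variable: your counterexample ($X=2^k$ with probability $2^{-k/2}$ and $0$ otherwise, with $T\equiv 1$, $\delta\equiv 0$, $\epsilon(k)=2^{-k/2}$) satisfies the definition yet has expectation $2^{k/2}$, so the naive bound $E[X]\le T(k)\sum_{l\ge 0}(\delta(l)+\epsilon(k))$ genuinely diverges and cannot be repaired without extra information. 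Your fix --- that in this model every statistic is bounded by a fixed polynomial $p(k)$ with probability $1$, because the adversary performs polynomially many steps, every honest-party action is triggered by a delivery, and $n$ is polynomial in $k$ --- is exactly the (implicit) convention under which CKPS~01 state the lemma, and it is consistent with how the present paper applies it (to message counts with $R(k)=O(n^2)$ and $n$ polynomial in $k$). With that bound, the truncation at $L=\lceil p(k)/T(k)\rceil$ gives an error term $T(k)\,L\,\epsilon(k)\le T(k)(p(k)+1)\epsilon(k)$, which is negligible since negligible functions are closed under multiplication by polynomials, and $c=\sum_{l\ge 0}\delta(l)$ converges because $\delta(l)<1/l^2$ for all large $l$; the remaining housekeeping ($X\ge 0$ for the tail integral, WLOG $T(k)\ge 1$, $\delta\ge 0$) is standard. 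The only caveat worth recording is that, strictly speaking, your statement is a corrected version of the lemma (adding the almost-sure polynomial bound as a hypothesis), not the literal statement quoted in the paper.
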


\subsection{Cryptographic abstractions}
\label{sub:crypto}

The main focus of this paper is on a novel distributed algorithms. Our
protocol uses cryptographic tools as black-boxes. To this end, 
we present our protocol assuming the existence of the cryptographic
abstractions as defined bellow.

\paragraph{Threshold signatures scheme.}

At the beginning of every execution, every party $p_i$ gets a private
function $\emph{share-sign}_i(m)$ from the dealer, which gets a
message $m$ and returns a 
signature-share $\sigma_i$. 
In addition, every party gets the following functions: (1)
$\emph{share-validate}(m,i,\sigma_i)$, which gets a message $m$, a party
identification $i$, and a signature-share $\sigma_i$, and returns
\emph{true} or \emph{false}; (2)
$\emph{threshold-sign}(\Sigma)$, which gets a set of signature-shares
$\Sigma$, and returns a threshold signature $\sigma$; and (3)
$\emph{threshold-validate}(m,\sigma)$, which gets a message $m$ and a
threshold signature $\sigma$, and returns \emph{true} or \emph{false}.
We assume that the above functions satisfy the following properties
except with negligible probability:

\begin{itemize}
  
%
  \item \emph{Share validation:} For all $i$, $1 \leq i \leq n$ and
  for every messages $m$, (1) $\emph{share-validate}(m,i,\sigma) = true$ if and only if $\sigma =
  \emph{share-sign}_i(m)$, and (2) if $p_i$ is
  honest, then it is infeasible for the adversary to compute
  $\emph{share-sign}_i(m)$.
  

   \item \emph{Threshold validation:} For every message $m$,
  $\emph{threshold-validate}(m,\sigma) = true$ if and only if
  $\sigma = \emph{threshold-sign}(\Sigma)$ s.t.\
  $|\Sigma| \geq 2f+1$ and for every $\sigma_i \in \Sigma$ there is a
  party $p_i$ s.t.\ $\emph{share-validate}(m,i,\sigma) = true$.

  
\end{itemize}

\paragraph{Threshold coin-tossing.}

We assume an unpredictable pseudo random generator (PRG) $G: S \to
\{1,\ldots,n\}$, that is known only to the dealer, which gets a
string $s \in S$ and returns a party $p \in \Pi$.
Following the standard cryptographic definitions, unpredictability
means that if $G$ is not known, then given
evaluations of $G$ at all points in some set $Q$, 
the advantage in evaluating $G$ at a point not in $Q$
is negligible.
(See formal definition
in~\cite{Cachin2000RandomOI,CachinSecure,LJY16,Julian}).
At the beginning of every execution, the dealer gives a private
function $\emph{coin-share}_i(s)$ to every party $p_i$, which gets a
string $s$ and returns a coin share $\sigma_i$.
In addition, two public functions are available to all parties: (1)
$\emph{coin-share-validate}(s,i,\sigma_i)$, which gets a string $s$, a
party identification $i$, and a coin share $\sigma_i$, and returns
\emph{true} or \emph{false}; and (2) $\emph{coin-toss}(s,\Sigma)$,
which gets a string $s$ and a set of coin shares, and returns a party
in $\Pi$.

We assume that the following properties are satisfied   
except with negligible probability:

\begin{itemize}

  \item For all $i$, $1 \leq i \leq n$ and for every string $s$, (1)
  $\emph{coin-share-validate}(s,i,\sigma) = true$ if and only if
  $\sigma = \emph{coin-share}_i(s)$, and (2) if $p_i$ is
  honest, then it is infeasible for the adversary to compute
  $\emph{coin-share}_i(s)$.
  
   \item For every string $s$,
  $\emph{coin-toss}(s,\Sigma) = G(s)$ if and only if
  $|\Sigma| \geq f+1$ and for every $\sigma \in \Sigma$ there is a
  party $p_i$ s.t.\ $\emph{coin-share-validate}(s,i,\sigma) = true$.  
  
\end{itemize}

\paragraph{Implementations.}
Several widely used and established implementations of these
abstractions can be found in the literature,
e.g.,~\cite{Cachin2000RandomOI, CachinSecure, shoup2000practical},
and few more recent ones~\cite{LJY16,Julian} provide implementations
that are also proven to be secure against an adaptive adversary.
In all implementations, given an input value of size $B$, the size of
the returned shares (from the \emph{share-sign} and \emph{coin-share}
functions) and the threshold signature (the return value of
\emph{threshold-sign}) is $B + \epsilon(k)$, where $\epsilon(k)$ is a
negligible function.

\subsection{Validated  Asynchronous Byzantine Agreement (VABA)}

In this paper we follow Cachin et al.~\cite{CachinSecure} and
define a (multi-valued) Byzantine agreement with 
an external function we call
\emph{ex-ba-validation}, that determines whether a value is valid for
agreement or not.
In addition, we explicitly require a notion of \emph{quality} to capture
the ``fairness'' of the decision value. Note that the protocol of
Cachin et al.
\cite{Cachin2000RandomOI,CachinSecure} already obtains this Quality property.


\begin{definition}[Validated Byzantine Agreement]

A protocol solves validated   Byzantine agreement with chain quality if it satisfies the
following properties except with negligible probability:

\begin{itemize}
  
  \item Validity: If an honest party decides an a value $v$, then
  $\emph{ex-ba-validation}(v) = true$.
  
  \item Agreement: All honest parties that terminates decide on the
  same value.


  \item Quality: The probability of choosing a value that was proposed by an
  honest party is at least $1/2$.
  

  \item Termination\footnote{Called liveness
  in~\cite{Cachin2000RandomOI}, but we find this name confusing since
  it is not a liveness~\cite{alpern1985liveness} property.}:
  If all honest parties start with externally valid values and all
  massages sent among honest parties have been delivered, then all
  honest parties decided.
  
  \item Efficiency: The number of messages generated by the
  honest parties is probabilistically uniformly bounded.

\end{itemize}

\end{definition}


\section{VABA with Optimal Resilience and Asymptotically Optimal time and Communication}
\label{sec:protocol}

In this section we give a protocol for asynchronous Byzantine agreement, secure against an adaptive adversary that controls up to $f < n/3$ parties,
with expected word communication $O(n^2)$ and expected running time $O(1)$.
Inspired by Cachin et al.~\cite{CachinSecure, CachinOPODIS,
Cachin2000RandomOI} we present a modular implementation of our
protocol, which consists of three sub protocols: two broadcast
primitives we call \emph{\BCname} and \emph{\SBCname}, and a simple
and efficient leader election protocol.

\subsection{Broadcast primitives}

A broadcast primitive is an abstraction for a pre-defined party,
which is called a \emph{sender}, to pass a message to all other
parties. An instance of a broadcast primitive is identified via an \emph{id}.
We present two broadcast abstractions: \emph{\SBCname} and \emph{\BCname}, the
first is a simple but useful primitive that we use
to implement the second, which in turn is used for the Byzantine agreement protocol.
In both, a sender broadcasts a message $m = \lr{v,\sigma}$ consisting
of a value $v$ and a proof $\sigma$. The proof is provided in order to
allow recipients to screen messages.
Both abstractions are parametrized with an \textit{external validation}
function that parties use for screening messages.
These functions implement an important logic that drive the safety properties
of our Byzantine agreement protocol.

The API of our broadcasts somewhat differs from other broadcasts in the
literature in two ways:
\begin{itemize}
  
  \item The sender's broadcast returns a proof $\sigma$.
  Informally, $\sigma$ proves, via the threshold
  signature abstraction, that at least $f+1$ honest parties delivered
  $m.v$.
  
  \item Parties can invoke \emph{abandon(id)} to explicitly stop their
participation in the broadcast protocol.
  %
\end{itemize}

\subsubsection{\SBCname}

An \SBCname\ of a message $m$ with identification $id$ is denoted
\SBC{$id$}{$m$}.
The external validation function of a party $p_i$ is denoted 
\emph{ex-sbc-validation}$_i(id, m)$.
\SBC{$id$}{$m$} satisfies the following properties
except with negligible probability:


\begin{itemize}
  
  \item \textbf{PB-Integrity:} An honest party delivers a message at
  most once.
  
  \item \textbf{PB-Validity:} If an honest party $p_i$ delivers $m$,
  then \emph{ex-sbc-validation}$_i(id,m) = true$.
  
  \item \textbf{PB-Abandon-ability:} An honest party does not deliver
  any message after it invokes abandon(id). 
  
  \item \textbf{PB-Provability:} For all $v,v'$, if the sender can
  produce strings $\sigma,\sigma'$ s.t.\\
  $\emph{threshold-validate}(\lr{id,v},\sigma) = \emph{true}$ and
  $\emph{threshold-validate}(\lr{id,v'},\sigma') = \emph{true}$, then
  (1) $v = v'$ and (2) $f+1$ honest parties delivered a
  message $m$ s.t.\ $m.v = v$.
  
  \item \textbf{PB-Termination:} If the sender is honest, no
  honest party invokes abandon(id), all messages among honest parties
  arrive, and the message $m$ that is being broadcast is
  externally valid, then (1) all honest parties deliver $m$, and (2)
  \SBC{id}{m} returns (to the sender) $\sigma$, which satisfies\\
  $\emph{threshold-validate}(\lr{id,m.v},\sigma) = \emph{true}$.

  \item \textbf{PB-Linear-complexity:} The total number of messages sent by
  honest players is at most $2n$. 
  
\end{itemize}


The pseudocode of \SBCname\ appears in
Algorithms~\ref{alg:SBCsender} and~\ref{alg:SBCall}.
The sender sends a message $m = \lr{v,\sigma}$ consisting of a value
$v$ and a proof $\sigma$ to all parties.
Once a party gets a message $m$ from the sender that passes its
ex-sbc-validation function for the first time it produces a valid
signature share of $\nu$ on $\lr{id,m.v}$ , delivers the message $m$,
and sends the share $\nu$ back to the sender.
When the sender receives $2f+1$ valid shares it produces a valid
threshold signature and returns it.
If \emph{abandon(id)} is invoked by a party, it ignores all
further messages of this broadcast instance.

\begin{algorithm}
\caption{\SBCname\ with identification $id$: Protocol for the sender} 
\begin{algorithmic}[1]

\Statex \textbf{Local variables initialization:}
\State $S = \{\}$

\Statex

\Upon{\SBC{$id$}{$\lr{v,\sigma}$} invocation}

\State send ``$id,\emph{send},\lr{v,\sigma}$''
to all parties \State \textbf{wait} until $|S| = 2f+1$
\State  \textbf{return} $\emph{threshold-sign}(S)$  
\Statex

\EndUpon

\Receiving{``$id,\emph{ack},\nu_j$'' form party $p_j$ for the first
time}

\If{$\emph{share-validate}(\lr{id,v},j,\nu_j) = \emph{true}$}

	\State $S \gets S \cup \{\nu_j\}$

\EndIf

\EndReceiving

\end{algorithmic}
\label{alg:SBCsender}
\end{algorithm}

\begin{algorithm}
\caption{\SBCname\ with identification $id$: Protocol for a party
$p_i$} \begin{algorithmic}[1]

\Statex \textbf{Local variables initialization:}
\State $\emph{stop} \gets false$

\Statex

\Receiving{``$id,\emph{send},\lr{v,\sigma}$''
from the sender }
	
	\If{$\emph{stop} = false \wedge
	\emph{ex-sbc-validation(id,$\lr{v,\sigma}$)} =
	\emph{true}$}
		
		\State $\emph{stop} \gets true$
		\State $\nu_i \gets \emph{share-sign}_i(\lr{id,v})$
		\State \textbf{deliver} $\lr{v,\sigma}$
		\State send ``$id,\emph{ack},\nu_i$'' to the sender
	
	\EndIf

\EndReceiving

\Statex

\Upon{abandon($id$)}
 
	\State $\emph{stop} \gets true$

\EndUpon

\end{algorithmic}
\label{alg:SBCall}
\end{algorithm}

\subsubsection{\BCname}

A \BCname\ of message $m$ with identification $id$ is denoted by
\BC{$id$}{$m$}.
The external validation function of a party $p_i$ is denoted 
\emph{ex-bc-validation}$_i(id,m)$. \BCname\ has three deliveries
which we refer to as \textbf{key}, \textbf{lock}, and
\textbf{commit},
respectively.
These deliveries satisfy similar properties to \SBCname, but where 
\textbf{PB-provability} additionally guaranteeing that a \textbf{lock} delivery implies that
a \textbf{key} delivery has occurred in at least $f+1$ honest parties, and a
\textbf{commit} delivery implies that a \textbf{lock} delivery has occurred in at
least $f+1$ honest parties.

\BCname\ is implemented on top of \SBCname, and consists of 
four sequential invocations of \SBCname\ with
identifications $\lr{id,j}$ for $j \in \{1,\ldots,4\}$.
\SBC{$\lr{id,j}$}{$\lr{v,\lr{\sigma_{ex},\sigma_{in}}}$} is invoked
with two proofs: a proof $\sigma_{ex}$ for external validity
condition of ex-bc-validation that implements a logic of the VABA
protocol, and a proof $\sigma_{in}$ for external validity condition
of ex-sbc-validation that implements a logic of the \BCname.
\SBC{$\lr{id,1}$}{$\lr{v,\lr{\sigma_{ex},\sigma_{in}}}$} takes
$\sigma_{in} = \bot$, while the remaining invocations require
$\sigma_{in}$ to be a valid output of the previous one.

When a party delivers a message $\lr{v,\lr{\sigma_{ex},\sigma_{in}}}$
in the $j^{th}$ broadcast instance (the f+1-Provable- Broadcast with identification $\lr{id, j}$), it
delivers \textbf{key}$(id,\lr{v,\,\sigma_{in}})$,
\textbf{lock}$(id,\lr{v,\,\sigma_{in}})$, or
\textbf{commit}$(id,\lr{v,\,\sigma_{in}})$, respectively.
When an \emph{abandon(id)} is invoked by a party, it simply invokes
\emph{abandon($\lr{id,j})$} for all $j \in \{1,\ldots,4\}$.
We do not define and prove the properties of the \BCname. 
Instead, we use it to describe our validated asynchronous
Byzantine agreement protocol, which we prove with the \SBCname\
properties directly.

The pseudocode of \BCname\ appears in
Algorithms~\ref{alg:BCsender} and~\ref{alg:BCall}.

\begin{algorithm}
\caption{\BCname\ with identification $id$: Protocol for a sender.} \begin{algorithmic}[1]

  \Upon{\BC{$id$}{$\lr{v,\sigma_{ex}}$} invocation}
  
  	\State $\sigma_{in} \gets \bot$
  	
  	\For{$j =1,..,4$}
  	
  		\State $\sigma_{in} \gets \SBC{\langle id,j \rangle}{\langle 
  		v, \lr{\sigma_{ex},\sigma_{in}} \rangle}$
  	
  	\EndFor
  	\State \textbf{return} $\sigma$
  	
  \EndUpon
  
  \Statex

  \Statex \textbf{External validation function the for \SBCname
  instances:}
  
  \Procedure{ex-sbc-validation}{$\langle id,j \rangle, \langle 
  		v, \lr{\sigma_{ex},\sigma_{in}} \rangle$}

 	\If{$j=1 \wedge \emph{ex-bc-validation}(id,\lr{v,\sigma_{ex}}) = \emph{true}$}

 		\State \textbf{return} true
 		
 	\EndIf

 	\If{$j > 1 \wedge \emph{threshold-validate}(\lr{\lr{id,j-1},
 	v},\sigma_{in})$ = \emph{true}}
 	
 		\State \textbf{return} true
 	
 	\EndIf
 	
 	\State \textbf{return} false

%
%
%
%
%
%
%
%
%
%
%
%
 	
\EndProcedure

\end{algorithmic}
\label{alg:BCsender}
\end{algorithm}

\begin{algorithm}
\caption{\BCname\ with identification $id$: Protocol for all parties.} 
\begin{algorithmic}[1]

\Upon{delivery($\langle id,j \rangle ,
\lr{v,\lr{\sigma_{ex},\sigma_{in}}} $)}
\Comment{$j \in {1,\ldots,4}$}

	\If{$j = 2$}
		\State \textbf{deliver key}$(id,\lr{v,\sigma_{in}})$
	\EndIf
	
	\If{$j = 3$}
		\State \textbf{deliver lock}$(id,\lr{v,\sigma_{in}})$
	\EndIf
	
	\If{$j = 4$}
		\State \textbf{deliver commit}$(id,\lr{v,\sigma_{in}})$
	\EndIf

  \EndUpon
  
  \Statex

\Upon{abandon($id$)}

 	\For{$j=1,\ldots,4$}
 	
 		\State \emph{abandon($\langle id, j \rangle$)} 
 	
 	\EndFor
 
\EndUpon

\end{algorithmic}
\label{alg:BCall}
\end{algorithm}

\subsection{Leader election}

A leader election abstraction provides one operation to elect a unique
party (called a leader) among the parties.
An instance of a leader election primitive is identified via an \emph{id}, and exposes an
operation $elect(id)$ to all parties, which returns a party $p \in
\Pi$. Formal definitions are given below and the pseoudocode appears
in Algorithm~\ref{alg:election}.

A protocol for leader election associated with id $id$ satisfies
the following properties except with negligible probability.

\begin{itemize}
  
  \item \textbf{Termination:} If $f+1$ honest parties invoke
  $elect()$, and all messages among honest parties
  arrive, then all invocations by honest parties return.
  
  \item \textbf{Agreement:} All invocations of $elect(id)$ by honest
  parties return the same party.
  
  \item \textbf{Validity:} If an invocation of $elect(id)$ by an
  honest party returns, it returns a party $p$ with probability $1 /
  |\Pi|$ for every $p \in \Pi$.
  
  \item \textbf{Unpredictability:} The probability of the adversary
  to predict the returned value of $elect(id)$ invocation by an honest
  party before it returns is at most $1 / |\Pi|$ + $\epsilon(k)$,
  where $\epsilon(k)$ is a negligible function. 
  
\end{itemize}

\begin{algorithm}
\caption{Leader election. Protocol for party $p_i$} 
\begin{algorithmic}[1]

\Statex \textbf{Local variables initialization:}
\State $\Sigma \gets \{\}$

\Statex

\Upon{$elect(id)$}

	\State $\rho_i \gets \emph{coin-share}_i(id)$
	\State send ``$\textsc{share},id,\rho_i$'' to all parties
	\State \textbf{wait} until $|\Sigma| = f+1$
	\State \textbf{return} $\emph{coin-toss}(id,\Sigma)$

\EndUpon

\Statex

 \Receiving{``$\textsc{share},id,\rho_j$'' from party $p_j$ for
 the first time}
 
 	\If{$\emph{coin-share-validate}(id,j,\rho_j) = true$}
 	
 		\State $\Sigma \gets \Sigma \cup \{\rho_j\}$
 	
 	\EndIf
 
 \EndReceiving

\end{algorithmic}
\label{alg:election}
\end{algorithm}

\subsection{VABA protocol}
\label{sub:PABA}

\subsubsection{Overview.} 

The \BCname\ and the
leader election abstractions are used as building blocks for our
validated asynchronous Byzantine agreement protocol.
The protocol works in a view-by-view manner, where each view consists
of three phases: \emph{Broadcast Phase}, \emph{Leader-election Phase}, and
\emph{View-change Phase}.
In the \emph{Broadcast Phase}, each party invokes \BCname\
to broadcast its value. Parties wait to learn that $2f+1$ instances of
\BCname\ have \emph{completed} by returning at their respective senders.
Then, in the \emph{leader-election Phase}, parties choose a 
leader $p_l$ uniformly at random.
Finally, in the \emph{View-change Phase}, parties learn what
happened in the elected leader's \BCname.
If they learn that some party delivered \textbf{commit}(v) (and has a
proof that justifies it), they can decide $v$.
Otherwise, they need to carefully adopt a value and continue to the
next view. (more details are in the view-change description in
Section~\ref{subsub:PFRJ}). 

Our protocol guarantees that at least $2f+1$ \BCname\ instances complete in the
broadcast phase before a leader is chosen. 
If the elected leader has completed (implying that at least $f+1$
honest parties delivered a \textbf{commit}) then even an adaptive
adversary cannot prevent progress.
Otherwise, the view-change phase ensures that agreement is not
violated even if a bad leader is chosen.
Since the probability to choose a leader that completed its
broadcast is constant, the number of views in the protocol is
constant in expectation.
More concretely, the probability to choose a completed broadcast is
greater than 2/3, and thus the number of views in expectation is less
than $3/2$.
More details on each phase are given bellow.
The pseudocode appears in Algorithms~\ref{alg:BAmessages}
and~\ref{alg:BAoperation}, and a formal proof is given in
Appendix~\ref{app:proofs}

\subsubsection{Protocol for view $j$}
\label{subsub:PFRJ}

\paragraph{Broadcast phase.}
Each party broadcasts, using a \BCname, the value and \emph{KEY} it has
adopted from the previous views, as determined in the View-Change
Phase (explained below); at view $1$, a party broadcasts its input and an empty key. 
Parties participate in $n$ concurrent \BCname s, where
their ex-bc-validation function (Algorithm~\ref{alg:BAoperation}
lines~\ref{line:BAvalidty_starts}-\ref{line:BAvalidty_ends}) uses
the external validity condition for agreement (i.e.,
ex-ba-validation function), as well as a condition on the \emph{KEY}
as explained below.

Each party sends a notification about the completion of its own \BCname\
carrying its output for proof. When a party receives $2f+1$ such notifications, it
sends a signature share on a ``skip'' message. Each party waits to obtain
(either directly or indirectly) a combined threshold skip signature, 
forwards it to others, and moves to the Leader-election phase (even
if its broadcast has not completed).

\paragraph{Leader election phase.}
Once a party enters the Leader-election Phase, it abandons all the
\BCname\ instances.
The parties elect a leader via the leader election abstraction
and continue to the next phase as if only the the leader's \BCname\
ever occurred.

From now on, we refer to the \BCname\ of the leader of a view $j$ 
as the \BCname\ of view $j$, and refers to its delivery events as the
deliveries of view $j$. 

\paragraph{View-change phase.}

In the View-change Phase of view $j$, parties report their deliveries
from the broadcast of this view (the leader’s broadcast)
to everyone, including a proof of delivery
for each report.
Each party waits to collect $2f+1$ reports.
Recall that \BCname\ provides the following guarantees: 
If some honest party delivered a \textbf{commit}, then $f+1$ honest parties delivered
\textbf{lock}, hence all honest parties collect this \textbf{lock} in the
view-change exchange.
Similarly, if some honest party delivered a \textbf{lock}, then $f+1$ honest
parties delivered \textbf{key}, hence all honest parties collect this
\textbf{key} in the view-change exchange.

All parties maintain two cross-view variables, \emph{LOCK} and
\emph{KEY}: 
\begin{itemize}
    \item The \emph{LOCK} variable stores the highest view
$\mathcal{R}$ for which the party ever received a view-change message
that includes a \textbf{lock}.

    \item The \emph{KEY} variable stores the highest view
$\mathcal{R}$, for which the party ever received a view-change message
that includes a \textbf{key} and the key itself. 
\end{itemize}

Once a party has collected $2f+1$ view-change messages, it processes them as
follows. 
If it receives a \textbf{commit} $v$, it decides $v$.
Otherwise, if it receives a \textbf{lock}, 
it increases its \emph{LOCK} variable to the current view.
Last, if it receives a \textbf{key}, it updates its \emph{KEY}
variable to store the current view and the received \textbf{key}.
If it did not reach a decision, a party 
adopts the value $v$ of its
(up-to-date) \emph{KEY} variable and moves to the next view, where
it broadcasts $v$ together with $\textit{KEY}$ as proof for the
external validation function (ex-bc-validation).

As mentioned above, a party participates in
a \BCname\ only if the message $m =\lr{v,
\lr{\mathcal{R},\textbf{key}}}$ (note
that $\lr{\mathcal{R},\textbf{key}} = \textit{KEY}$) passes its
external validation test.
The external validation includes a crucial \emph{key-locking
mechanism} (see Algorithm~\ref{alg:BAoperation},
lines~\ref{line:BAvalidty_starts} to~\ref{line:BAvalidty_ends}).
In particular, in view $j > 1$, a party checks that the 
\textbf{key} is valid for $v$ and $\mathcal{R}$ (meaning that
\textbf{key} includes a proof that \textbf{key}(v) could have
been delivered by an honest party in the chosen broadcast of view
$\mathcal{R}$), and that the view $\mathcal{R}$ is at least as large
as \emph{LOCK}.
We prove in Appendix~\ref{app:proofs} that the key-locking mechanism
together with the fact that parties abandon all broadcasts before
sending the view-change messages guarantee agreement and satisfy
progress.
Here we give some intuition for the proof:

\begin{itemize}
  
  \item Lock Safety: If some party has a proof for \textbf{commit} $v$ in view
  $\mathcal{R}$, then at least $f+1$ honest parties previously
  locked (\emph{lock} = $\mathcal{R}$) in view $\mathcal{R}$.
  
  \item Key Safety:  If some party has a proof for \textbf{commit}
  $v$ in view $\mathcal{R}$, then it is not possible for a party to
  have a valid \textbf{key} on a value other than $v$ in view higher
  than or equal to $\mathcal{R}$.
  
  \item Key Progress: If some party $p_i$ is locked
  in view $\mathcal{R}$, then at least $f+1$ honest parties
  obtained a \textbf{key} in $\mathcal{R}$ before sending the
  view-change messages of view $\mathcal{R}$, and thus all honest
  parties will have a \emph{KEY} with view at least
  $\mathcal{R}$.
  
\end{itemize}

\paragraph{Communication complexity.}
We start by analyzing the word communication of a single view.
Recall that a word can contain a constant number of domain values
and signatures. We denote by  $V$ the domain of valid values for
the Byzantine agreement.
Note that every message that is broadcasted during the agreement
protocol consists of a value $v \in V$ and two
proofs - a proof (key) for the ex-bc-validation function of the
\BCname\ and another proof for the ex-sbc-validation function of every
instance of the \SBCname s therein.
Since a proof is simply a threshold signature on a value and the
broadcast identification, we get that the size of the
broadcast messages is a single word.
The total number of messages sent by honest parties in an
\SBCname\ is $2n$.
Therefore, since \BCname\ uses 4 instances of \SBCname, we get
that the number of words sent by honest parties in the \SBCname\
protocol is $O(n)$.
%
%

Our Byzantine agreement protocol uses $n$ concurrent \BCname s in
phase 1 of every view, which brings the word communication to
$O(n^2)$. 
The word complexity of sending the all to all ``skip-share'' and
``skip'' messages as well as the message complexity of the leader
election abstraction is $O(n^2)$.
Finally, since each party sends $n$ view change messages, of a
single word (at most 3 values and 3 proofs), we get that the word
communication of the view-change phase is $O(n^2)$.
In total, the word communication of a single view in our
Byzantine agreement protocol is $O(n^2)$.
We prove in Appendix~\ref{app:proofs} that our protocol has an
expected constant number of views, which implies that the expected
word communication of our asynchronous Byzantine agreement protocol
is $O(n^2)$.

\begin{algorithm}
\caption{Validated asynchronous byzantine agreement with
identification $id$:
protocol for party $p_i$.} \begin{algorithmic}[1]

\Statex \textbf{Local variables initialization:}

\Statex $LOCK \gets 0$
\Statex $KEY \gets \langle 0, v_i, \bot \rangle$ with selectors
\emph{view},
\emph{value}, \emph{proof}
\Statex \textbf{for every} view $j \geq 1$, \textbf{initialize}:
\Statex \hspace*{0.5cm}$v_j \gets v_i$ ; $\sigma_j \gets 
\lr{0, \bot}$ ;  $L[j] \gets \bot$; $BCdone_j \gets 0$; $BCskip_j
\gets \{\}$; $skip_j \gets false$
\Statex \hspace*{0.4cm} \textbf{for every} party $p_k \in \Pi$
\textbf{initialize}:
\Statex \hspace*{1cm} $D_{key}[k,j] =
D_{lock}[k,j] = D_{commit}[k,j] = \langle \bot, \bot \rangle$

%
%
%

\Statex

\Statex \textbf{External validity for the \BCname:}

\Procedure{ex-bc-validation}{$id,\langle v , \lr{j,\sigma}
\rangle$}
\label{line:BAvalidty_starts}

\If{\emph{ex-ba-validation}($v$) = false}
\label{line:ex-ba}
\Comment{external ABA validity check}

	\State \textbf{return} false

\EndIf

\If{$ j \neq 1 \wedge
\emph{threshold-validate}(\lr{\lr{\lr{id,L[j],j},1}, v}, \sigma$) =
false} \Comment{validate the key}
	
	\State \textbf{return} false

\EndIf

\If{$j \geq LOCK$}  \Comment{check that key is not smaller than lock}

	\State \textbf{return} true 
	
\Else

	\State \textbf{return} false

\EndIf

\EndProcedure
\label{line:BAvalidty_ends}

\Statex
\Statex \textbf{Protocol for party $p_i$}

\State $j \gets 1$
\While {true}

\ForAll{k=1,\ldots,n}
 \Comment{Broadcast phase}

	\State initialize an instance of \BCname\ with identification
	$\langle id,k,j \rangle$

\EndFor
\State $\sigma \gets$ \BC{$\langle id, i ,j \rangle$}{$\lr{v_j,
\sigma_j}$}

\State \textbf{wait} for  \BC{$\langle id, i ,j \rangle$}{$\lr{v_j,
\sigma_j}$} to
return or $skip_j = true$

\If{$skip_j = \emph{false}$}

	\State send ``$id, done,j,\lr{\lr{v_j, \sigma_j},\sigma}$'' to all
	parties

\EndIf

\State \textbf{wait} until $skip_j =true$

\Statex \Comment{Leader election phase}

\ForAll{k=1,\ldots,n}

	\State $abandon(\langle id, k ,j \rangle)$

\EndFor

\State $L[j] \gets $ \emph{$elect(\lr{id,j})$}

\Statex \Comment{View-change phase}

\State send
``$\emph{View-change},id,j,D_{key}[L[j],j],D_{lock}[L[j],j],D_{commit}[L[j],j]$''
to all parties

\State \textbf{wait for} \emph{View-change} messages from $2f+1$
different parties

\State $v_{j+1} \gets  KEY.value $
\State $\sigma_{j+1} \gets\lr{KEY.round, KEY.proof}$

\State $j \gets j + 1$
\label{line:endroundj}

\EndWhile

\end{algorithmic}
\label{alg:BAoperation}
\end{algorithm}

\begin{algorithm}
\caption{Validated asynchronous byzantine agreement with
identification $id$:
messages.} \begin{algorithmic}[1]

%
%
%

\Upon{$key(\langle id,k,j \rangle,\lr{v,\sigma})$}

	\State $D_{key}[k,j] = \lr{v,\sigma} $

\EndUpon

\Statex

\Upon{$lock(\langle id,k,j \rangle,\lr{v,\sigma})$}

	\State $D_{lock}[k,j] = \lr{v,\sigma} $

\EndUpon

\Statex

\Upon{$commit(\langle id,k,j \rangle,\lr{v,\sigma})$}

	\State $D_{commit}[k,j] = \lr{v,\sigma} $

\EndUpon

\Statex

 \Receiving{``$id,done,j,\lr{v,\sigma}$'' from party $p_i$ for
 the first time}

	\If{$ \emph{threshold-validate}(\lr{\lr{\lr{id,i,j},4},v},
	\sigma)$}

		\State $BCdone_j \gets BCdone_j + 1$

		\If{$BCdone_j = 2f+1$ and ``\emph{skip-share}'' message has not
		sent yet}

			\State $\nu \gets \emph{sigh-share}(\lr{id,skip,j})$
			\State send ``$id,\emph{skip-share},j,\nu$'' to all parties

		\EndIf

	\EndIf

 \EndReceiving

  \Statex

\Receiving{``$id,\emph{skip-share},j,\nu$'' from party $p_i$ for
 the first time}

   \If{\emph{share-validate}($\lr{id,skip,j},i,\nu)$}

		\State $BCskip_j \gets BCskip_j \cup \{\nu\}$

		\If{$|BCskip_j| = 2f+1$}

			\State $\sigma \gets \emph{threshold-sign}(BCskip_j)$
			\State send ``$id,skip,j,\sigma$'' to all parties

		\EndIf

   \EndIf

\EndReceiving

	 \Statex

\Receiving{``$id,skip,j,\sigma$''}

	\If{\emph{threshold-validate}($\lr{id,skip,j}, \sigma$) =
	true}

		\State $skip_j \gets true$
	\EndIf

	\If{``skip'' message was not sent yet}
		\State send ``$id,skip,j,\sigma$'' to all parties

 	\EndIf

 \EndReceiving
	
%
%
%
%
%
%
%
%
%
%
%
%
%
%
%
%
%
%
%
%
%

%
%
%
%
%
 
 \Statex

\Receiving{``$\emph{View-change},j,\langle v_2, \sigma_2
\rangle,\langle v_3, \sigma_3 \rangle, \langle v_4, \sigma_4
\rangle$''}

	\State $l \gets L[j]$

 	\If{$v_4 \neq \bot \wedge 
	\emph{threshold-validate}(\lr{\lr{\lr{id,l,j},3},v_4}, \sigma_4) =
	true$}
	
	\State \textbf{decide} $v_4$
	
	\EndIf

	\If{$v_3 \neq \bot \wedge j > lock \wedge
	\emph{threshold-validate}(\lr{\lr{\lr{id,l,j},2},v_3}, \sigma_3) =
	true$}
	
		\State $LOCK \gets j$

	\EndIf

	\If{$v_2 \neq \bot \wedge j > key.round \wedge
	\emph{threshold-validate}(\lr{\lr{\lr{id,l,j},1},v_2}, \sigma_2) =
	true$}
	
		\State $KEY \gets \lr{j,v_2,\sigma_2}$
	
	\EndIf
 
\EndReceiving

\Statex


\end{algorithmic}
\label{alg:BAmessages}
\end{algorithm}



\clearpage
\section{Discussion}
\label{sec:discussion}

Our protocol addresses an open problem of Cachin et al.
\cite{CachinSecure} and reduces the expected word communication from
$O(n^3)$ to $O(n^2)$ against an asynchronous adaptive adversary. We
also show that in the standard definition of an asynchronous adaptive
adversary this expected word communication is asymptotically optimal
for any protocol that obtains the standard definition of termination
(liveness) as defined~\cite{Cachin2000RandomOI,CachinSecure}.
An interesting open question is related to protocols that obtain weak termination in the adaptive setting:
is there a  $\Omega(n^2)$ lower bound
against an adaptive adversary that is required to deliver
all messages sent by parties before they are corrupted? or does there
exist a protocol with near linear expected word communication under this weak termination property?

\newpage
\begin{appendices}

\section{Correctness proofs}
\label{app:proofs}

 \subsection{\SBCname}
 
 Note that \emph{PB-integrity}, \emph{PB-validity},
 \emph{PB-abandon-ability}, and \emph{PB-linear-complexity} follow
 immediately from the code. 
 We now prove \emph{PB-provability} and \emph{PB-termination}.
 
\begin{lemma}

Algorithms~\ref{alg:SBCsender} and~\ref{alg:SBCall} satisfy
PB-termination. 

\end{lemma}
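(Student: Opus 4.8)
The plan is to prove PB-termination by tracing through the honest execution of Algorithms~\ref{alg:SBCsender} and~\ref{alg:SBCall} and showing that both conclusions---(1) all honest parties deliver $m$, and (2) the sender obtains a valid threshold signature $\sigma$---follow from the hypotheses. The hypotheses I get to use are: the sender is honest, no honest party invokes $abandon(id)$, all messages among honest parties arrive, and the broadcast message $m$ is externally valid (i.e., $\emph{ex-sbc-validation}_i(id,m) = \emph{true}$ for every honest $p_i$). The proof is essentially a liveness argument: I follow the messages and show nothing can get stuck.

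**First I would** establish conclusion (1). Since the sender is honest and invokes \SBC{$id$}{$m$}, by Algorithm~\ref{alg:SBCsender} it sends the message ``$id,\emph{send},m$'' to all parties, and by the ``all messages among honest parties arrive'' hypothesis this reaches every honest $p_i$. Each honest $p_i$ processes this in the \textbf{Receiving} block of Algorithm~\ref{alg:SBCall}: its guard is $\emph{stop} = \emph{false} \wedge \emph{ex-sbc-validation}(id,m) = \emph{true}$. The second conjunct holds by the external-validity hypothesis; the first holds because $\emph{stop}$ is only ever set to $\emph{true}$ either inside this same block (but $p_i$ receives the sender's message for the first time here) or upon $abandon(id)$, which by hypothesis no honest party invokes. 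Hence every honest $p_i$ passes the guard and executes the body, which calls \textbf{deliver} $m$. This gives conclusion (1).

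**Next I would** establish conclusion (2). In the same body, after delivering, each honest $p_i$ computes a share $\nu_i = \emph{share-sign}_i(\lr{id,m.v})$ and sends ``$id,\emph{ack},\nu_i$'' back to the sender. There are at least $n - f \geq 2f+1$ honest parties (using $n = 3f+1$, or more generally $n - f \geq 2f+1$ since $f < n/3$), so the honest sender receives at least $2f+1$ such acknowledgments; all arrive by hypothesis. For each, the sender runs the \textbf{Receiving} block of Algorithm~\ref{alg:SBCsender} and checks $\emph{share-validate}(\lr{id,v},j,\nu_j)$, which returns $\emph{true}$ by the Share-validation property of the threshold signature scheme (since $\nu_j = \emph{share-sign}_j(\lr{id,v})$ for honest $p_j$). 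Thus each valid honest share is added to $S$, so eventually $|S| \geq 2f+1$, the ``\textbf{wait} until $|S| = 2f+1$'' condition is met, and the sender returns $\sigma = \emph{threshold-sign}(S)$. By the Threshold-validation property, since $|S| \geq 2f+1$ and every $\nu_j \in S$ passed share-validation on $\lr{id,v}$, we get $\emph{threshold-validate}(\lr{id,m.v},\sigma) = \emph{true}$, which is conclusion (2).

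**The step I expect to require the most care** is the bookkeeping around the $\emph{stop}$ flag and the ``for the first time'' qualifier in the sender's \textbf{Receiving} block: I must argue that no honest party's $\emph{stop}$ is prematurely set (ruling out the $abandon$ path by hypothesis and noting the delivery path fires exactly once per honest party), and that the sender's set $S$ genuinely accumulates to size $2f+1$ rather than stalling---which rests on counting at least $2f+1$ honest parties and invoking share-validation correctly. Everything else is a direct reading of the pseudocode against the stated cryptographic properties, so the bulk of the argument is routine once this counting and flag-management is pinned down.
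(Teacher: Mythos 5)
Your proof is correct and follows essentially the same route as the paper's own argument: conclusion (1) from the honest sender's send reaching all honest parties (whose validation guard passes), and conclusion (2) from counting at least $2f+1$ honest acknowledgments whose shares pass \emph{share-validate}, so the sender's wait completes and \emph{threshold-sign} yields a signature accepted by \emph{threshold-validate}. Your version is in fact somewhat more careful than the paper's, which glosses over the \emph{stop}-flag bookkeeping that you spell out.
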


\begin{proof}

Consider a \SBCname\ instance with identification $id$ in which an
honest sender $p_i$ broadcasts an externally valid message $m$, no
honest party invokes \emph{abandon(id)}, and all messages among honest
parties arrived.
Since the sender is honest, it sent a message to all honest parties,
and since all messages among honest parties arrived, we get that all
honest parties received a message from the sender and thus all honest
parties delivered a message and (1) is satisfied.
For (2), note that the sender accept an ack message only if it
contains a valid share.
Since there are at least $2f+1$ honest parties, the sender gets at
least $2f+1$ valid shares, and thus by the threshold validation
property, it produces a valid threshold signature except with
negligble probability.

\end{proof}

\begin{lemma}

Algorithms~\ref{alg:SBCsender} and~\ref{alg:SBCall} satisfy
PB-provability. 

\end{lemma}
 
Consider an \SBCname\ instance with identification $id$ and assume
that a sender can produce two strings $\sigma_1,\sigma_2$ s.t.\
$\emph{threshold-validate}(\lr{id,v_1},\sigma_1) = 
\emph{threshold-validate}(\lr{id,v_2},\sigma_2) = \emph{true}$ for
some values $v_1,v_2$. 
By the threshold validation property, except with
negligible probability, for $j \in \{1,2\}$, $\sigma_j =
\emph{threshold-sign}(\Sigma_j)$ s.t.\ $|\Sigma_j| \geq 2f+1$ and for
every $\sigma_j' \in \Sigma_j$ there is a party $p_i$ s.t.\
$\emph{share-validate}(\lr{id,v_j},i,\sigma_1') = true$.
For $j \in \{1,2\}$, let $P_j = \{p_i \mid \exists \sigma_j' \in
\Sigma_j$ s.t.\ $\emph{share-validate}(\lr{id,v_j},i,\sigma_1') = true
\}$, and note that $|P_j| \geq 2f+1$.
Therefore, there are at lest $f+1$ honest parties in $P_j$, $j \in
\{1,2\}$, and thus there is an honest party $p_i$ that is in both
$P_1$ and $P_2$.
By the code, $p_i$ computes a \emph{share-sign(v)} at most once.
Therefore, by the share validation property we get that $v_1 = v_2 =
v$ except with negligible probability, which proves (1).
For (2), note that an honest party deliver a message $m$ before
computing a share on $m.v$.

\subsection{Validated asynchronous byzantine agreement}

\paragraph{Notations.}
Consider a byzantine agreement instance with identification $id$.
All parties start at view $1$, and we say that a party
\emph{completes} view $j \geq 1$ and \emph{moves} to view $j + 1$
when it executes line~\ref{line:endroundj} in
Algorithm~\ref{alg:BAoperation} for the $j^{th}$ time.
We say that view $j$ \emph{completes} when at least $f+1$ honest
parties move to view $j+1$.
The \emph{leader} of view $j$ is the party returned by the
$elect(\lr{id,j})$ invocations by honest parties at view $j$, and
we say that it is \emph{elected} when the first
such invocation returns.
We say that a \textbf{key} $= \lr{j,\sigma}$ is valid
for a value $v$ if
$\emph{threshold-validate}(\lr{\lr{\lr{id,l,j},1},v}, \sigma) =
true$, where $l$ is the index of the leader of view $j$.

Consider a \BCname\ with identification $id$.
For simplicity of exposition, we denote the delivery
\textbf{key}$(\lr{id,m})$ by $deliver_2(\lr{id,m}$,
\textbf{lock}$(\lr{id,m})$ by $deliver_3(\lr{id, m}$, and
\textbf{commit}$(\lr{id,m})$ by $deliver_4(\lr{id, m}$.
In addition, the variables $D_{key}, D_{lock},D_{commit}$ from
Algorithm~\ref{alg:BAoperation} are called here $D_2,D_3,D_4$,
respectively. 
We say that a \BCname\ with identification $id$
\emph{completes} when $f+1$ honest parties $deliver_4(\lr{id, \lr{v,\sigma}})$ for some $\sigma$ and $v$, and its \emph{completion-proof} is a string
$\sigma'$, which satisfies $\emph{threshold-validate}(\lr{\lr{id,4},v},\sigma') = \emph{true}$.

\paragraph{Properties of the \BCname.}

The next four lemmas prove important properties of the \BCname.

\begin{lemma}
\label{lem:provability1}

Consider a \BCname\ with identification $id$.
 For every value $v$ and $j \in \{2,3,4\}$,
  if some honest party gets $\sigma$ s.t.
  $\emph{threshold-validate}(\lr{\lr{id,j},v},\sigma) = \emph{true}$,
  then at least $f+1$ honest parties previously $deliver_j(id,\langle
  v, \sigma' \rangle)$ for some $\sigma'$.

\end{lemma}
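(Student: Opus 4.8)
The plan is to unwind the definition of $\BCname$ as a sequence of four $\SBCname$ instances and then invoke the $\SBCname$ properties---specifically PB-Provability---that were already established. The key observation is that the statement is essentially the ``first part'' of provability lifted from the simple broadcast to the four-stage broadcast, so the whole proof should reduce to a single application of PB-Provability of the appropriate $\SBCname$ instance together with the bookkeeping that connects $deliver_j$ of the $\BCname$ to delivery in the $j^{\text{th}}$ $\SBCname$.

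First I would fix an honest party that obtains a string $\sigma$ with $\emph{threshold-validate}(\lr{\lr{id,j},v},\sigma) = \emph{true}$ for some $j \in \{2,3,4\}$. By construction, $\sigma$ is exactly a valid threshold signature produced by (or producible from) the $j^{\text{th}}$ $\SBCname$ instance, which has identification $\lr{id,j}$ and signs messages of the form $\lr{\lr{id,j}, v}$. I would then appeal to part (2) of PB-Provability applied to this $\SBCname$ instance: the existence of such a $\sigma$ (even producible by a possibly faulty sender) forces that at least $f+1$ honest parties delivered a message $m$ with $m.v = v$ in the $\lr{id,j}$ instance of $\SBCname$. This is the substantive step, and it is almost entirely inherited from the already-proven PB-Provability lemma.

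The remaining step is translating ``$f+1$ honest parties delivered $m$ with $m.v = v$ in the $j^{\text{th}}$ $\SBCname$'' into ``$f+1$ honest parties $deliver_j(id, \lr{v, \sigma'})$ for some $\sigma'$.'' Here I would simply invoke the code of Algorithm~\ref{alg:BCall}: whenever a party delivers a message in the $\lr{id,j}$ instance of $\SBCname$, it immediately performs the corresponding $\textbf{key}$, $\textbf{lock}$, or $\textbf{commit}$ delivery, which under the notational convention of the appendix is precisely $deliver_j$. The proof $\sigma'$ carried by that delivery is the $\sigma_{in}$ field of the delivered message, so the existential claim over $\sigma'$ is automatically satisfied.

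I do not expect any serious obstacle here, since the lemma is a direct corollary of the $\SBCname$ provability property combined with the structural reduction of $\BCname$ to $\SBCname$. The only care needed is to be precise about indices---matching the external claim's threshold signature on $\lr{\lr{id,j},v}$ with the signature that the $j^{\text{th}}$ $\SBCname$ instance certifies on $\lr{\lr{id,j}, v}$---and to make sure the ``except with negligible probability'' qualifier is carried through the single invocation of threshold validation, which it is, since we compose only a constant number of infeasibility assumptions.
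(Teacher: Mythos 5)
Your proposal is correct and follows essentially the same route as the paper's own proof: apply part (2) of \emph{PB-Provability} to the $j^{th}$ \SBCname\ instance (identification $\lr{id,j}$) to obtain $f+1$ honest deliveries of a message with value $v$, then use the code of Algorithm~\ref{alg:BCall} to translate each such delivery into $deliver_j(id,\lr{v,\sigma_{in}})$, taking $\sigma'=\sigma_{in}$. No gap here.
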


\begin{proof}

By the second part of the \emph{PB-provability} property of the
\SBCname, at least $f+1$ honest parties
$deliver(\lr{id,j},\lr{v,\lr{\sigma_{ex},\sigma_{in}}})$ with some
$\lr{\sigma_{ex},\sigma_{in}}$ in the $j^{th}$ instance of the \SBCname.
Therefore, by the code, at least $f+1$ honest parties previously
$delivers_{j}(id,\langle v, \sigma_{in} \rangle)$.

\end{proof}

\begin{lemma}
\label{lem:provability2}

Consider a \BCname\ with identification $id$.
For every value $v$ and $j \in \{2,3,4\}$, if some honest party
$deliver_j(id,\langle v, \sigma \rangle)$, than \\
$\emph{threshold-validate}(\lr{\lr{id,j-1},v},\sigma) =
\emph{true}$.

\end{lemma}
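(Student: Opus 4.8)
The plan is to trace the \textbf{key}/\textbf{lock}/\textbf{commit} delivery back to the underlying \SBCname\ instance that produced it, invoke the PB-Validity property of that instance, and then read off the code of \emph{ex-sbc-validation}. This is a direct unwinding of the construction rather than a deep argument.

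First I would recall, from Algorithm~\ref{alg:BCall}, that a $deliver_j(id,\lr{v,\sigma})$ event for $j \in \{2,3,4\}$ (that is, a \textbf{key}, \textbf{lock}, or \textbf{commit} delivery) occurs exactly when the party delivers some message $\lr{v,\lr{\sigma_{ex},\sigma_{in}}}$ in the $j^{th}$ underlying \SBCname\ instance, whose identification is $\lr{id,j}$, and the proof $\sigma$ recorded by the \BCname\ is precisely the inner proof $\sigma_{in}$. Since the delivering party $p_i$ is honest, the PB-Validity property of that \SBCname\ instance guarantees $\emph{ex-sbc-validation}_i(\lr{id,j},\lr{v,\lr{\sigma_{ex},\sigma_{in}}}) = true$.

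Next I would inspect the \emph{ex-sbc-validation} procedure in Algorithm~\ref{alg:BCsender}. Because $j \in \{2,3,4\}$ satisfies $j>1$, the first branch (the $j=1$ case invoking \emph{ex-bc-validation}) cannot be the one that returned true, so the only way the procedure can return true is through the second branch, which fires precisely when $\emph{threshold-validate}(\lr{\lr{id,j-1},v},\sigma_{in}) = true$. Substituting $\sigma = \sigma_{in}$ gives $\emph{threshold-validate}(\lr{\lr{id,j-1},v},\sigma) = true$, which is exactly the claim.

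The only step that demands care — though it is hardly an obstacle — is the bookkeeping of which proof component is carried through the layers: one must confirm that the proof attached to the key/lock/commit delivery in Algorithm~\ref{alg:BCall} is the inner proof $\sigma_{in}$, the same object that \emph{ex-sbc-validation} at level $j$ checks against the threshold signature from level $j-1$, and not the external proof $\sigma_{ex}$ that feeds \emph{ex-bc-validation}. Once that correspondence is fixed, the lemma follows immediately from PB-Validity together with the structure of the validation predicate.
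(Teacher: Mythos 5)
Your proof is correct and follows essentially the same route as the paper's: trace the $deliver_j$ event back to a delivery in the $j^{th}$ \SBCname\ instance, apply PB-Validity to conclude \emph{ex-sbc-validation} returned true, and then use $j>1$ to read off the threshold-validate condition on $\sigma_{in}$ from the validation code. Your extra care about which proof component ($\sigma_{in}$ versus $\sigma_{ex}$) is carried into the \textbf{key}/\textbf{lock}/\textbf{commit} delivery is exactly the bookkeeping the paper's proof also relies on, just stated more explicitly.
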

  
\begin{proof}

Let $p_i$ be an honest party that
$delivers_j(id,\lr{v,\sigma_{in}})$ for some $v$, and $j \in
\{2,\ldots,4\}$.
By the code, $p_i$
$delivers(\lr{id,j},\lr{v,\lr{\sigma_{ex},\sigma_{in}} })$ in an
\SBCname\ with identification $\lr{id,j}$ and some $\sigma_{ex}$, and
by its \emph{validity} property,
\emph{ex-sbc-validation}$_i(\lr{id,j},\lr{v,\lr{\sigma_{ex},\sigma_{in}}})
= true$.
Therefore, since $j > 1$, we get by the code of
\emph{ex-sbc-validation} that
\emph{threshold-validate}$(\lr{\lr{id,j-1}, v},\sigma_{in})$ =
\emph{true}.

\end{proof}

\begin{lemma}
\label{lem:provability3}

Consider a \BCname\ with identification $id$, two honest parties
$p_1,p_2$, and two values $v_1,v_2$.
For every $j_1,j_2 \in \{2,3,4\}$, if $p_1$ gets $\sigma_1$ s.t.\\
$\emph{threshold-validate}(\lr{\lr{id,j_1},v_1},\sigma_1) =
\emph{true}$ and $p_2$ gets $\sigma_2$ s.t.
$\emph{threshold-validate}(\lr{\lr{id,j_2},v_2},\sigma_2) =
\emph{true}$, then $v_1 = v_2$.
  
\end{lemma}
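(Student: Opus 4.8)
Lemma~\ref{lem:provability3} asserts a "cross-$j$" agreement property: if two honest parties each hold a valid threshold signature on some value, at possibly different stages $j_1, j_2 \in \{2,3,4\}$ of the same \BCname\ instance $id$, then the two signed values must coincide. Let me think about what the earlier lemmas give me and how to chain them.

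Let me recall the structure. The \BCname\ instance $id$ consists of four sequential \SBCname\ invocations with identifications $\langle id, 1\rangle, \ldots, \langle id, 4\rangle$. A threshold signature $\sigma$ with $\emph{threshold-validate}(\langle\langle id,j\rangle, v\rangle, \sigma) = \emph{true}$ is precisely the output of the $j$-th \SBCname\ instance (i.e., the proof that the $j$-th \SBCname\ completed on value $v$).

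**Key observation.** The crucial fact is that within a single \SBCname\ instance $\langle id, k\rangle$, the value is unique — this is \emph{PB-provability} part (1). And the mechanism linking consecutive stages is: a valid signature at stage $j$ requires that stage $j$ delivered a message whose $\sigma_{in}$ was a valid stage-$(j-1)$ signature (Lemma~\ref{lem:provability2}), which by Lemma~\ref{lem:provability1} means $f+1$ honest parties delivered at stage $j-1$ on the same value. So a signature at stage $j$ "pulls back" to the same value being signed/delivered at all lower stages.

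Let me think about how I'd prove it.

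\medskip

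The plan is to reduce the cross-stage claim to the single-stage uniqueness guaranteed by \emph{PB-provability} of the \SBCname. WLOG assume $j_1 \le j_2$. The strategy is to show that $v_2$, the value signed at the higher stage $j_2$, must already be the value that the first stage (stage $1$, equivalently the base \SBCname) fixed — and likewise for $v_1$ — so that both equal a common value and hence equal each other.

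\medskip

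First I would argue that a valid stage-$j$ signature forces the same value down to stage $1$. Concretely, suppose $p$ holds $\sigma$ with $\emph{threshold-validate}(\langle\langle id,j\rangle, v\rangle,\sigma)=\emph{true}$ for some $j\in\{2,3,4\}$. By \emph{PB-provability} of the $j$-th \SBCname, $f+1$ honest parties delivered (in stage $j$) a message whose value is $v$; by Lemma~\ref{lem:provability2}, the accompanying $\sigma_{in}$ satisfies $\emph{threshold-validate}(\langle\langle id,j-1\rangle, v\rangle,\sigma_{in})=\emph{true}$, i.e.\ there exists a valid stage-$(j-1)$ signature on the \emph{same} value $v$. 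Iterating this descent from $j$ down to $1$, I obtain that there exists a valid stage-$1$ signature (output of the base \SBCname) on value $v$. Thus any valid signature at any stage $j\in\{2,3,4\}$ on $v$ witnesses a completion of the \emph{base} \SBCname\ instance $\langle id,1\rangle$ on $v$.

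\medskip

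Applying the descent to both hypotheses gives: the base \SBCname\ instance $\langle id,1\rangle$ produced a valid threshold signature on $v_1$ (via $p_1$'s signature at stage $j_1$) and also on $v_2$ (via $p_2$'s signature at stage $j_2$). Now \emph{PB-provability} part (1) of the single \SBCname\ instance $\langle id,1\rangle$ states that if the sender can produce valid threshold signatures on two values, those values must be equal. Hence $v_1 = v_2$, completing the proof. (All steps hold except with negligible probability, inherited from the underlying threshold-signature and \SBCname\ guarantees.)

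\medskip

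The main obstacle — and the point to state carefully — is the descent argument: I must make sure the induction is clean, namely that "a valid stage-$j$ signature on $v$" really does entail "a valid stage-$(j-1)$ signature on the same $v$," and that at the bottom the stage-$1$ guarantee is exactly the single-instance uniqueness I want to invoke. Lemma~\ref{lem:provability2} gives the inductive step directly, but it is phrased in terms of a \emph{delivery} event $deliver_j$, so I first convert "holds a valid stage-$j$ signature" into "some honest party $deliver_j$'d value $v$" via Lemma~\ref{lem:provability1}, then feed that delivery into Lemma~\ref{lem:provability2} to extract the lower-stage signature. The only subtlety is keeping the value fixed across the descent; this is automatic because each lemma preserves the value $v$ componentwise. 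I would present the descent as a short induction (or just note that two applications suffice since $j\le 4$) and then close with the single \SBCname\ uniqueness.
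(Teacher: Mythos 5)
Your proof is correct and rests on the same core mechanism as the paper's: a descent step (a valid stage-$j$ threshold signature yields, via Lemma~\ref{lem:provability1} and then Lemma~\ref{lem:provability2}, i.e.\ the \emph{ex-sbc-validation} check, a valid stage-$(j-1)$ signature on the same value), closed off by the single-instance uniqueness of part (1) of \emph{PB-provability}. The only cosmetic difference is that the paper descends the higher-stage signature to meet the lower one via a case analysis on $j_1 - j_2 \in \{0,1,2\}$, whereas you normalize both signatures down to stage $1$ and apply uniqueness there; the argument is otherwise identical.
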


\begin{proof}

Consider three cases:

\begin{itemize}
  
  \item $j_1 = j_2$. The lemma follows from the first part of the
  provability property of the \SBCname.
  
  \item $j_1 = j_2 +1$. Since $p_1$ gets $\sigma_1$ s.t.
$\emph{threshold-validate}(\lr{\lr{id,j_1},v_1},\sigma_1) =
\emph{true}$, by the second part of the provability property of the
\SBCname, we get that at least one honest party delivers
$\lr{v_1,\lr{\sigma_{ex},\sigma_{in}}}$ for some
$\lr{\sigma_{ex},\sigma_{in}}$ in the $j_2^{th}$ instance of the
\SBCname.
By the external validity property of \SBCname, we get that
$\emph{threshold-validate}(\lr{\lr{id,j_2},v_1},\sigma_{in}) =
\emph{true}$.
Therefore, by the first part of the provability property of \SBCname,
we get that $v_1 = v_2$.

\item $j_1 = j_2 +2$. Follows by transitivity.  

\end{itemize}

\end{proof}

\begin{lemma}
\label{lem:BCtermination}

Consider a \BCname\ instance with identification $id$.
If the sender is honest, no honest party invokes abandon(id), all
messages among honest parties arrived, and the message $m$ that is
being broadcast is externally valid, 
then the broadcast completes and returns a completion-proof to the
sender. 

\end{lemma}

\begin{proof}

Consider a \BC{id}{v} invocation by party $p_i$.
By the code of \emph{ex-sbc-validation}, any message is externally
valid for the the first instance of an \SBCname\ consisting it.
Therefore, by the PB-termination property of \SBCname, $p_i$ gets
$\sigma_1$ that satisfies\\ \emph{threshold-validate}$(\lr{\lr{id,1},
v},\sigma_1)$ = \emph{true}, and thus the conditions of the
PB-termination property is satisfied in the second instance of
\SBCname\ as well ($\lr{v,\sigma_1}$ is externally valid).
So by the PB-termination property of \SBCname again, (1) all honest
party $deliver_2(\lr{id,\lr{v,\sigma_1}}$; and (2) $p_i$
gets $\sigma_2$ that satisfies
\emph{threshold-validate}$(\lr{\lr{id,2}, v},\sigma_2)$ =
\emph{true}.
The lemma follows by applying the same arguments for the remain two
\SBCname\ instances.

\end{proof}

\paragraph{Key protocol property.}

The next Lemma use two of the \BCname\ properties proven above to
prove a key property for both safety and progress of the protocol.

\begin{lemma}
\label{lem:f1causality}

Consider a view $j$ and let $p_l$ be the chosen leader of view $j$.
For every $i \in \{2,3\}$, if an honest party gets a 
``view-change'' message that includes $\lr{v,\sigma}$ s.t.\\ 
$\emph{threshold-validate}(\lr{\lr{\lr{id,l,j},i},v}, \sigma)
= true$, then all honest parties get a 
``view-change'' message that includes $\lr{v,\sigma'}$ s.t.\ 
$\emph{threshold-validate}(\lr{\lr{\lr{id,l,j},i-1},v}, \sigma')
= true$.

\end{lemma}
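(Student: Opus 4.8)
The plan is to combine the two provability lemmas for the \BCname\ (Lemmas~\ref{lem:provability1} and~\ref{lem:provability2}) with a quorum-intersection argument over the view-change messages. Fix the view $j$ and its leader $p_l$, so that the relevant broadcast is the \BCname\ with identification $\lr{id,l,j}$, and fix $i \in \{2,3\}$. Suppose some honest party obtains, inside a view-change message, a pair $\lr{v,\sigma}$ with $\emph{threshold-validate}(\lr{\lr{\lr{id,l,j},i},v},\sigma) = true$; I want to produce, for every honest party, a view-change message carrying a proof for the same value $v$ that validates against instance $i-1$.

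First I would apply Lemma~\ref{lem:provability1} to the broadcast $\lr{id,l,j}$ with instance index $i$: since an honest party holds a threshold signature validating instance $i$ on $v$, at least $f+1$ honest parties previously executed $deliver_i(\lr{id,l,j},\lr{v,\sigma'})$ for some $\sigma'$. Call this set $A$, so $|A| \geq f+1$ and all its members delivered the same value $v$. Next I would apply Lemma~\ref{lem:provability2} to each such delivery: because these parties performed $deliver_i$ on $\lr{v,\sigma'}$, the carried proof satisfies $\emph{threshold-validate}(\lr{\lr{\lr{id,l,j},i-1},v},\sigma') = true$, i.e.\ it is precisely a proof for $v$ against instance $i-1$, as required by the conclusion.

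Now I would connect these deliveries to the view-change messages actually sent. When a party in $A$ performs $deliver_i$ it records $\lr{v,\sigma'}$ into the corresponding entry of its delivery table ($D_{key}[l,j]$ when $i=2$, or $D_{lock}[l,j]$ when $i=3$), and by \textbf{PB-integrity} this entry is written at most once and hence never overwritten. The key timing observation is that every honest party abandons all \BCname\ instances upon entering the leader-election phase, and only afterwards elects $p_l$ and sends its view-change message; hence by \textbf{PB-abandon-ability} any $deliver_i$ performed by a party in $A$ necessarily precedes the sending of that party's view-change message. Consequently, each party in $A$ includes $\lr{v,\sigma'}$ (in the key field when $i=2$, or the lock field when $i=3$) in the view-change message it broadcasts.

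Finally I would close with quorum intersection. Any honest party completes view $j$ only after collecting view-change messages from a set $Q$ of $2f+1$ distinct parties. Since $|Q| + |A| \geq (2f+1)+(f+1) = 3f+2 > n$, the sets $Q$ and $A$ intersect, so every such honest party receives a view-change message from some party in $A$, and that message carries $\lr{v,\sigma'}$ validating instance $i-1$ on $v$. The main obstacle is exactly the timing argument of the previous paragraph: one must rule out the degenerate possibility that a party of $A$ emits its view-change message before its $deliver_i$ occurs (which would leave the relevant table entry at its default $\lr{\bot,\bot}$), and this is precisely what the abandon-before-elect-before-view-change ordering, together with \textbf{PB-abandon-ability}, guarantees.
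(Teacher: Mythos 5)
Your proof is correct and follows essentially the same route as the paper's: Lemma~\ref{lem:provability1} to obtain the $f{+}1$ honest parties that performed $deliver_i$, Lemma~\ref{lem:provability2} to show their carried proofs validate against instance $i-1$, the abandon-before-elect-before-view-change ordering (via \textbf{PB-abandon-ability}) to ensure those deliveries are recorded in $D_i[l,j]$ before the view-change messages are sent, and intersection of the $2f{+}1$-sized view-change quorum with that honest set. Your write-up merely makes the counting $(2f+1)+(f+1) > n$ and the timing argument more explicit than the paper does.
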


\begin{proof}

By Lemma~\ref{lem:provability1}, there exists a set $P \in \Pi_h$ of
at least $f+1$ honest parties that \\
$deliver_i(\lr{id,l,j},\lr{v,\sigma'})$ for some $\sigma'$, and
by Lemma~\ref{lem:provability2},
$\emph{threshold-validate}(\lr{\lr{\lr{id,l,j},i-1},v}, \sigma') =
true$.
By the code, the parties in $P$ set their $D_i[l,j] = \lr{v,\sigma'}$,
and by the PB-abandon-ability property of the \SBCname, it happened
before they invoke $abandon(\lr{id,l,j})$, and thus before they send their
``view-change'' messages.
Since every honest party waits to receive \emph{view-change} messages
from $2f+1 = n - f$ different parties, we get that every honest party
gets a view-change message that includes $\lr{v,\sigma'}$ s.t.\ 
$\emph{threshold-validate}(\lr{\lr{\lr{id,l,j},i-1},v}, \sigma')
= true$.

\end{proof}

\paragraph{Agreement proof.}

\begin{lemma}
\label{lem:lockonj}

If a party $p$ decide on a value $v$ in a view $j$, then the
{LOCK} variables of all honest parties are at least $j$ when they
move to view $j+1$.

\end{lemma}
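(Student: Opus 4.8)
The plan is to trace the chain of threshold signatures that a view-$j$ decision forces, and to show that this chain deposits a valid \emph{lock} proof in the view-change exchange of \emph{every} honest party before it advances. First I would unpack what ``$p$ decides $v$ in view $j$'' means at the level of the code: by the \emph{decide} branch of the view-change handler (Algorithm~\ref{alg:BAmessages}), $p$ received a view-change message of view $j$ carrying a value $v_4=v$ together with a \emph{commit} proof $\sigma_4$ with $\emph{threshold-validate}(\lr{\lr{\lr{id,l,j},3},v},\sigma_4)=\emph{true}$, where $l=L[j]$ is the elected leader of view $j$. Taking $p$ honest (the case relevant to agreement), this says that an honest party has obtained a level-$3$ (commit) proof for $v$ in the leader's broadcast of view $j$. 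If one insists on allowing a faulty $p$, the mere existence of $\sigma_4$ already suffices, since a valid level-$3$ signature forces, by \emph{PB-provability} of the \SBCname\ (Lemma~\ref{lem:provability1}), at least $f+1$ honest parties to have $deliver_3$ (lock) the value $v$ — which is exactly the set produced next.

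Next I would invoke Lemma~\ref{lem:f1causality} with $i=3$. Its hypothesis — an honest party gets a view-change message with a level-$3$ proof for $v$ — is precisely what the previous paragraph established, so its conclusion yields that \emph{every} honest party gets, among the $2f+1$ view-change messages it collects in view $j$, one that carries $v$ together with a level-$2$ (lock) proof $\sigma'$ satisfying $\emph{threshold-validate}(\lr{\lr{\lr{id,l,j},2},v},\sigma')=\emph{true}$. I would recall why this holds, as it is the heart of the matter: Lemma~\ref{lem:provability1} produces a set $P\subseteq\Pi_h$ of at least $f+1$ honest parties that $deliver_3(\lr{id,l,j},\lr{v,\sigma'})$, Lemma~\ref{lem:provability2} certifies $\sigma'$ at level $2$, and by \emph{PB-abandon-ability} each such party recorded $D_{lock}[l,j]=\lr{v,\sigma'}$ before abandoning, hence before sending its view-change message; since every honest party waits for $2f+1=n-f$ distinct view-change messages and at most $n-(f+1)=2f$ senders lie outside $P$, at least one collected message comes from $P$ and carries the lock proof.

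Finally I would close the argument for an arbitrary honest party $q$ that moves to view $j+1$. Advancing to view $j+1$ (line~\ref{line:endroundj}) occurs only after $q$ has collected, and therefore run the view-change handler on, $2f+1$ view-change messages of view $j$, and by the previous paragraph one of these carries $v$ with a valid level-$2$ lock proof. When $q$ processes that message, the guard of the \emph{LOCK}-update branch (a non-$\bot$ lock field, a level-$2$ proof that passes \emph{threshold-validate} for $v$ at leader $l$ and view $j$, and a current \emph{LOCK} below $j$) is satisfied unless $q$'s \emph{LOCK} is already at least $j$; in the first case $q$ sets its \emph{LOCK} to $j$, and in the second it is already at least $j$. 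Since \emph{LOCK} is only ever overwritten by a strictly larger view number, it never decreases, so $q$'s \emph{LOCK} is at least $j$ at the instant it moves to view $j+1$.

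The main obstacle is the quorum/timing step folded into Lemma~\ref{lem:f1causality}: guaranteeing that the lock proof reaches \emph{every} advancing honest party \emph{before} it leaves view $j$. Its crux is the ordering ``deliver lock $\Rightarrow$ record $D_{lock}[l,j]$ $\Rightarrow$ abandon $\Rightarrow$ send view-change'' supplied by \emph{PB-abandon-ability} together with the fact that parties abandon all broadcasts before sending their view-change message (Algorithm~\ref{alg:BAoperation}), combined with the $(2f+1)+(f+1)>n$ intersection of the collected quorum with $P$. The only remaining care points are minor: the ``$j>\mathit{lock}$'' guard, which is harmless because \emph{LOCK} never decreases, and, if full generality over a faulty decider is desired, replacing ``$p$ received the proof'' by ``the proof exists'' via \emph{PB-provability} as noted in the first paragraph.
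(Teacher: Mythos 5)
Your proof is correct and takes essentially the same route as the paper's: a decision in view $j$ yields a received level-3 (commit) proof, Lemma~\ref{lem:f1causality} with $i=3$ then places a level-2 (lock) proof for $v$ in every honest party's collected view-change quorum, and the code forces $LOCK \geq j$ before any honest party executes line~\ref{line:endroundj}. Your additional material (inlining the quorum-intersection argument behind Lemma~\ref{lem:f1causality}, the treatment of the $j > lock$ guard, and the faulty-$p$ case via PB-provability) is sound elaboration of steps the paper leaves implicit, not a different approach.
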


\begin{proof}

Let $p_l$ be the leader of view $j$.
By the code, since $p$ decides $v$ in view $j$, we know that $p$
received a ``view-change'' message in view $j$ with $\lr{v,\sigma}$
s.t.\ $\emph{threshold-validate}(\lr{\lr{\lr{id,l,j},3},v},
\sigma) = true$.
By Lemma~\ref{lem:f1causality}, we get that all honest parties
received  a  ``view-change'' message that includes $\lr{v,\sigma'}$
s.t.\ $\emph{threshold-validate}(\lr{\lr{\lr{id,l,j},2},v}, \sigma')
= true$.
Therefore, by the code, all honest parties set their \emph{LOCK}
variables to $j$ ( if it was smaller than $j$).

\end{proof}

\noindent The next corollary follows from Lemma~\ref{lem:lockonj} and
the fact that the \emph{LOCK} variables are never decreased.  

\begin{corollary}
\label{col:lockj}

If a party $p$ decides on a value $v$ in a view $j$, then the
{LOCK} variables of all honest parties are at least $j$ when they
start any view $j' > j$.

\end{corollary}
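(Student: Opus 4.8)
The plan is to obtain the corollary from Lemma~\ref{lem:lockonj} together with the monotonicity of the \emph{LOCK} variable, so almost all the work has already been done. First I would record the monotonicity fact explicitly: inspecting the view-change handler in Algorithm~\ref{alg:BAmessages}, the only assignment to \emph{LOCK} is $LOCK \gets j$, and it is guarded by the test $j > lock$. Hence \emph{LOCK} can only increase during an execution and is never decreased, for every honest party.

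Next I would fix an arbitrary honest party $q$ and an arbitrary view $j' > j$, and assume $q$ starts view $j'$. Since the main loop in Algorithm~\ref{alg:BAoperation} processes views strictly sequentially---a party completes view $k$ and moves to view $k+1$ precisely by executing line~\ref{line:endroundj} for the $k^{th}$ time, and only then begins view $k+1$---reaching view $j'$ forces $q$ to have already completed view $j$ and moved to view $j+1$ at some earlier point in time. This is the only place where I need to be a little careful, and it is also the only conceivable obstacle: I must argue that any honest party that reaches a view beyond $j$ genuinely passed through the transition into view $j+1$, so that Lemma~\ref{lem:lockonj}, which is universally quantified over all honest parties that move to view $j+1$, actually applies to $q$.

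With that in hand, the conclusion is immediate. By Lemma~\ref{lem:lockonj}, at the instant $q$ moved to view $j+1$ its \emph{LOCK} variable was already at least $j$. Because $q$ starts view $j'$ strictly after it moved to view $j+1$, and because \emph{LOCK} never decreases, the value of $q$'s \emph{LOCK} when it starts view $j'$ is still at least $j$. As $q$ was an arbitrary honest party and $j'$ an arbitrary view greater than $j$, the \emph{LOCK} variables of all honest parties are at least $j$ when they start any view $j' > j$, which is exactly the statement of Corollary~\ref{col:lockj}.
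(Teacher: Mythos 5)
Your proof is correct and follows the paper's own route exactly: the paper derives this corollary from Lemma~\ref{lem:lockonj} combined with the observation that the \emph{LOCK} variables are never decreased, which is precisely your argument, only spelled out in more detail (the guard $j > lock$ in the view-change handler and the sequential traversal of views). Nothing is missing.
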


\begin{lemma}
\label{lem:agreement1}

If an honest party $p$ decides on $v$ in view $j$, than all honest
parties that decide in view $j$, decide $v$ as well.

\end{lemma}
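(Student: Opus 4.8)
The plan is to reduce the statement directly to the provability property of the leader's \BCname, namely Lemma~\ref{lem:provability3}. The key observation is that any decision in view $j$ is, by the code, backed by a valid threshold signature at the third (commit) level of a single \BCname\ --- that of the leader $p_l$ of view $j$ --- so two decisions in the same view are two threshold signatures at the same level of the same broadcast, and their values must coincide.

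First I would fix $p_l$ to be the leader of view $j$. By the Agreement property of the leader election abstraction, all honest parties obtain the same $l = L[j]$, so the broadcast identification $\lr{id,l,j}$ appearing in every honest party's decision check is well-defined and common to all of them. This is what lets us speak of a single \BCname\ instance against which Lemma~\ref{lem:provability3} applies.

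Next I would unfold the decision rule of Algorithm~\ref{alg:BAmessages}. Since $p$ decides $v$ in view $j$, it processed a ``view-change'' message carrying $\lr{v,\sigma}$ with $\emph{threshold-validate}(\lr{\lr{\lr{id,l,j},3},v}, \sigma) = \emph{true}$; hence $p$ holds a valid level-$3$ threshold signature on $v$ in the \BCname\ with identification $\lr{id,l,j}$. In exactly the same way, any honest party $p'$ that decides some value $v'$ in view $j$ must hold a $\sigma'$ with $\emph{threshold-validate}(\lr{\lr{\lr{id,l,j},3},v'}, \sigma') = \emph{true}$.

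Finally I would invoke Lemma~\ref{lem:provability3} on the \BCname\ with identification $\lr{id,l,j}$, taking the two honest parties to be $p$ and $p'$ and $j_1 = j_2 = 3$; it yields $v = v'$ except with negligible probability. The only point requiring care --- and the nearest thing to an obstacle --- is confirming that ``deciding in view $j$'' exactly instantiates the hypothesis of Lemma~\ref{lem:provability3} (a valid level-$3$ signature in the common leader broadcast) and that the leader identity is shared across honest parties; once these are in place the conclusion is immediate, and everything else is routine bookkeeping with the code.
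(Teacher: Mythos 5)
Your proof is correct and takes essentially the same route as the paper: both reduce the claim to the fact that two decisions in view $j$ are two valid threshold signatures at level $3$ of the same leader broadcast $\lr{id,l,j}$, which forces equal values. The only cosmetic difference is that you invoke Lemma~\ref{lem:provability3} with $j_1=j_2=3$ (and explicitly note leader-election agreement), while the paper cites the first part of PB-provability of the third \SBCname\ instance directly --- which is exactly what Lemma~\ref{lem:provability3} reduces to in that case.
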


\begin{proof}

Let $p_l$ be the leader of view $j$.
By the code, an honest party decides $v$ in view $j$ if and only if
it gets a ``view-change'' message with $\lr{v,\sigma}$ s.t.\
$\emph{threshold-validate}(\lr{\lr{\lr{id,l,j},3},v}, \sigma) = true$.
The lemma follows from the first part of the PB-provability property
of the third instance of the \SBCname.

\end{proof}

The next lemma shows that all \textbf{keys} for views higher than or
equal to a view in which a decision on a value $v$ was made are
valid only for $v$.

\begin{lemma}
\label{lem:keyinduction}

Assume an honest party $p$ decides on $v$ in a view $j$.
Than for every view $j' \geq j$, and every honest party that
gets $\lr{v',\sigma'}$ s.t.\
$\emph{threshold-validate}(\lr{\lr{\lr{id,l',j'},1},v'}, \sigma') =
true$, where $p_{l'}$ is the leader of view $j'$, we get that
$v' = v$. 

\end{lemma}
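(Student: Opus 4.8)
The plan is to prove the claim by strong induction on the view $j' \ge j$, with induction hypothesis that for every view $j''$ with $j \le j'' < j'$ every valid \textbf{key} of view $j''$ is valid only for $v$. Throughout, let $p_l$ be the leader of view $j$ and $p_{l'}$ the leader of view $j'$, and recall that a \textbf{key} of view $\mathcal{R}$ is, by definition, a valid \emph{threshold-validate} signature on the output of the \emph{first} \SBCname\ of the elected leader's \BCname\ of view $\mathcal{R}$.

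For the base case $j' = j$ I would work entirely inside the view-$j$ \BCname\ with identification $\lr{id,l,j}$. Since $p$ decides $v$ in view $j$, the code hands $p$ a proof $\sigma$ with $\emph{threshold-validate}(\lr{\lr{\lr{id,l,j},3},v},\sigma) = \emph{true}$, i.e.\ a valid signature on the third \SBCname\ output for $v$. I would then descend twice: Lemma~\ref{lem:provability1} followed by Lemma~\ref{lem:provability2} (applied with index $3$) produces $f+1$ honest parties holding a valid signature on the second \SBCname\ output for $v$, and the same two lemmas (with index $2$) yield a valid signature on the \emph{first} \SBCname\ output for $v$. A valid \textbf{key} for $v'$ is by definition a valid signature on that very same first \SBCname\ output (for $v'$), so the first part of \emph{PB-provability} of that \SBCname\ instance forces $v' = v$.

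For the inductive step $j' > j$, suppose some honest party holds a valid \textbf{key} for $v'$ in view $j'$. By \emph{PB-provability} of the first \SBCname\ of view $j'$, at least $f+1$ honest parties delivered value $v'$ there, so some honest $q$ passed its \emph{ex-bc-validation} on the leader's message $\lr{v',\lr{\mathcal{R},\sigma_{key}}}$ during the broadcast phase of view $j'$. The two halves of the argument trap the embedded round $\mathcal{R}$ inside $[j,j'-1]$. From below: since $p$ decided $v$ in view $j$ and $j'>j$, Corollary~\ref{col:lockj} gives that $q$'s \emph{LOCK} is at least $j$ throughout view $j'$, and the lock test in \emph{ex-bc-validation} demands $\mathcal{R} \ge \emph{LOCK} \ge j$. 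From above: $q$ validates \emph{before} it runs $elect(\lr{id,j'})$, so its array entry $L[\mathcal{R}]$ is still $\bot$ for every $\mathcal{R} \ge j'$; as honest parties only ever sign first-\SBCname\ shares under a genuine leader index, no key proof can \emph{threshold-validate} against the leader slot $\bot$, which forces $\mathcal{R} < j'$. Hence $\mathcal{R} \in [j,j'-1]$, and by leader-election \emph{Agreement} the leader index $q$ uses for view $\mathcal{R}$ is the true leader $p_{l_{\mathcal{R}}}$, so the verified $\sigma_{key}$ is exactly a valid \textbf{key} for $v'$ in view $\mathcal{R}$; the induction hypothesis then gives $v' = v$.

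I expect the upper bound $\mathcal{R} < j'$ to be the crux of the proof: the lower bound is immediate from Corollary~\ref{col:lockj} and the lock check, whereas excluding a fabricated present-or-future round rests on the causal observation that, at the instant $q$ signs in view $j'$, the leader of view $\mathcal{R} \ge j'$ has not yet been elected (so $L[\mathcal{R}] = \bot$) and hence no key proof can reference it. A secondary subtlety is the initial/empty key, for which \emph{ex-bc-validation} performs no \emph{threshold-validate} check; this boundary case can only be reached when $j = 1$, and it is absorbed by the base case, which already shows that every valid \textbf{key} of view $1$ is for $v$.
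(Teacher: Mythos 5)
Your main argument is correct and follows essentially the same route as the paper's proof: induction on the view, a base case that descends from the decision proof (a threshold signature on the third \SBCname\ output) down to a signature on the first output and applies PB\nobreakdash-provability there, and an inductive step that uses PB-provability to obtain an honest validator $q$, then Corollary~\ref{col:lockj} plus the lock test to force $\mathcal{R} \geq j$, then the induction hypothesis. (Your unrolled descent via Lemma~\ref{lem:provability1} and Lemma~\ref{lem:provability2} is exactly what the paper packages as Lemma~\ref{lem:provability3}.) Where you go beyond the paper is the upper bound $\mathcal{R} < j'$: the paper simply asserts that the validated key belongs to some view $j'' \leq j'$, whereas you justify it --- $q$ validates before its $elect(\lr{id,j'})$ invocation, so $L[\mathcal{R}] = \bot$ for all $\mathcal{R} \geq j'$, and no threshold signature can validate against an identification containing $\bot$ because honest parties only share-sign inside genuine \SBCname\ instances. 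You also make explicit the appeal to leader-election Agreement needed to identify the validator's $L[\mathcal{R}]$ with the true leader $p_{l_{\mathcal{R}}}$, which the paper leaves implicit. Both are genuine improvements in rigor on a step the paper glosses over.

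Your closing paragraph about the initial/empty key, however, is wrong as reasoned. If a message whose key field escapes the \emph{threshold-validate} check could pass \emph{ex-bc-validation}, the base case would give you nothing: such a message exhibits no valid view-$1$ key at all, so the statement ``every valid key of view $1$ is for $v$'' cannot be applied to it, and its delivery by $f+1$ honest parties in the first \SBCname\ of view $j'$ would manufacture a valid view-$j'$ key for an arbitrary externally valid $v' \neq v$ --- falsifying the lemma rather than being absorbed by it. The case must be excluded, not absorbed, and the exclusion comes from the lock test: the initial key carries view $0$ (the code initializes $KEY.view = 0$ and $\sigma_1 = \lr{0,\bot}$), and once some honest party decides in view $j \geq 1$, Corollary~\ref{col:lockj} gives $LOCK \geq j \geq 1$ at every honest validator, so a message whose key check is skipped fails the test $\mathcal{R} \geq LOCK$ and is never delivered. (The pseudocode's guard reads ``$j \neq 1$'' rather than ``$j \neq 0$''; taken literally, the unchecked case would indeed be reachable exactly when $j = 1$, as you observed, and then the lemma itself would be false --- so the guard has to be read against the view-$0$ initialization of the empty key, which is what the paper's own proof implicitly does.)
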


\begin{proof}

We prove by induction on the view number $j'$.

\textbf{Base:} $j' = j$.
By the code, $p$ gets a ``view-change'' message with $\lr{v,\sigma}$
s.t.\\ $\emph{threshold-validate}(\lr{\lr{\lr{id,l',j'},3},v}, \sigma)
= true$.
Therefore, by Lemma~\ref{lem:provability3}, we get that if an honest
party gets $\lr{v',\sigma'}$ s.t.\
$\emph{threshold-validate}(\lr{\lr{\lr{id,l',j'},1},v'}, \sigma') =
true$, then $v' = v$.

\textbf{step:} Assume the lemma holds for all $j''$, $j \leq j'' \leq
j'$, we now show that it holds for $j' + 1$ as well.
Assume by a way of contradiction that some honest party gets
$\lr{v',\sigma'}$ s.t.\
$\emph{threshold-validate}(\lr{\lr{\lr{id,l'+1,j'+1},1},v'}, \sigma')
= true$ and $v' \neq v$.
By the second part of the PB-provability property, at least one honest
party $p_1$ delivers $\lr{v',\lr{\sigma_{ex},\sigma_{in}}}$ for some
$\lr{\sigma_{ex},\sigma_{in}}$ in the \SBCname\ instance with
identification $\lr{\lr{id,l'+1,j'+1},1}$.
By Corollary~\ref{col:lockj}, the \emph{LOCK} variable of $p_1$ at
the beginning of view $j' + 1$ is at least $j$.
Therefore, by the PB-validity property of the \SBCname\ and the
definition of its external validation for $p_1$, we get that
there is a view $j''$, $j \leq j'' \leq j'$ s.t.\
$\emph{threshold-validate}(\lr{\lr{\lr{id,l'',j''},1},v'},
\sigma_{in}) = true$.
A contradiction to the inductive assumption.

\end{proof}

\begin{lemma}

Algorithms~\ref{alg:BAoperation} and~\ref{alg:BAmessages} satisfy
agreement.

\end{lemma}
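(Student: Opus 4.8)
The plan is to prove agreement by combining the two structural lemmas already established—Corollary~\ref{col:lockj} (decision in view $j$ forces every honest party's \emph{LOCK} to be at least $j$ in all later views) and Lemma~\ref{lem:keyinduction} (every valid \textbf{key} in any view $j' \geq j$ can only be for the decided value $v$)—with the single-view agreement of Lemma~\ref{lem:agreement1}. Since agreement within a fixed view is already handled by Lemma~\ref{lem:agreement1}, the remaining content is cross-view agreement: showing that two honest parties deciding in \emph{different} views necessarily decide the same value.

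First I would fix the earliest view in which any honest party decides. Let $j$ be the smallest view in which some honest party $p$ decides, say on value $v$. It suffices to show that any honest party $p'$ that decides in any view $j' \geq j$ also decides $v$; the case $j' = j$ is exactly Lemma~\ref{lem:agreement1}, so I would concentrate on $j' > j$. By the code, $p'$ deciding in view $j'$ means it received a ``view-change'' message carrying $\lr{v',\sigma'}$ with $\emph{threshold-validate}(\lr{\lr{\lr{id,l',j'},3},v'},\sigma') = true$, where $p_{l'}$ is the leader of view $j'$.

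The crux is to link this \textbf{commit}-level signature in view $j'$ back to the \textbf{key}-level signatures controlled by Lemma~\ref{lem:keyinduction}. I would apply Lemma~\ref{lem:provability3} (within view $j'$, taking $j_1 = 3$ and $j_2 = 1$) to conclude that any value for which a valid \textbf{key} exists in view $j'$ must equal $v'$. Separately, for $p'$ to be broadcasting in view $j' > j$ at all, its input value $v'$ had to pass \emph{ex-bc-validation}, which for $j' > 1$ requires a valid \textbf{key} $\lr{v', \sigma_{in}}$ with $\emph{threshold-validate}(\lr{\lr{\lr{id,l',j'},1},v'},\sigma_{in}) = true$ (this is exactly the object governed by the induction). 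Invoking Lemma~\ref{lem:keyinduction} at view $j' \geq j$ then forces $v' = v$, so $p'$ decides $v$ as required.

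The main obstacle I anticipate is the careful bookkeeping of which signature object lives at which \SBCname\ level: the decision in view $j'$ is witnessed by a level-$3$ (\textbf{commit}) threshold signature, whereas Lemma~\ref{lem:keyinduction} speaks about level-$1$ (\textbf{key}) signatures, so I must bridge these via Lemma~\ref{lem:provability3} to convert the \textbf{commit} witness into a statement about valid \textbf{keys} in the same view before the induction can be applied. A secondary subtlety is making sure the ``earliest decision view'' framing is sound—since \emph{LOCK} only increases (Corollary~\ref{col:lockj}) and keys are constrained from view $j$ onward, anchoring at the minimal decision view $j$ is exactly what lets the inductive hypothesis of Lemma~\ref{lem:keyinduction} cover every later decision. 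All these ingredients are already proved, so the final argument should be a short assembly rather than a new technical development.
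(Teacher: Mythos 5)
Your skeleton is the same as the paper's proof: anchor at the earliest decision (view $j$, value $v$), dispose of same-view agreement with Lemma~\ref{lem:agreement1}, and handle a later view $j'$ by pulling the \textbf{commit}-level signature down to the \textbf{key} level so that Lemma~\ref{lem:keyinduction} applies. But the step in which you exhibit the required valid \textbf{key} is wrong as written, and it is load-bearing: Lemma~\ref{lem:provability3} and Lemma~\ref{lem:keyinduction} are pure uniqueness statements, so unless some honest party actually holds a valid level-$1$ threshold signature for the view-$j'$ leader's broadcast, both conditionals are vacuous and nothing forces $v' = v$.

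Concretely, you claim that for $p'$ ``to be broadcasting in view $j'$ at all, its input value $v'$ had to pass \emph{ex-bc-validation}, which requires a valid key with $\emph{threshold-validate}(\lr{\lr{\lr{id,l',j'},1},v'},\sigma_{in}) = true$.'' This confuses two different objects. First, the decided value $v'$ is the value carried by the leader $p_{l'}$'s broadcast, not $p'$'s own input; $p'$ need never have broadcast $v'$. Second, the key a party presents to \emph{ex-bc-validation} when broadcasting in view $j'$ is its cross-view \emph{KEY} variable $\lr{\mathcal{R},\sigma}$ from an earlier view $\mathcal{R} < j'$, and the code validates it against $\lr{\lr{\lr{id,L[\mathcal{R}],\mathcal{R}},1},\cdot}$, i.e., instance $1$ of view $\mathcal{R}$'s leader --- not against $\lr{\lr{\lr{id,l',j'},1},\cdot}$. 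So the object you produce is not ``exactly the object governed by the induction.'' The key you need must instead be obtained by descending from the decision witness itself: the valid signature on $\lr{\lr{\lr{id,l',j'},3},v'}$ implies, by the second part of PB-provability, that $f+1$ honest parties delivered $v'$ in the third \SBCname\ instance of the leader's broadcast; by PB-validity and the code of \emph{ex-sbc-validation}, each of them holds a valid signature on $\lr{\lr{\lr{id,l',j'},2},v'}$, and repeating the argument once more yields honest parties holding a valid signature on $\lr{\lr{\lr{id,l',j'},1},v'}$. (This descent is exactly the content of the proofs of Lemma~\ref{lem:provability3} and Lemma~\ref{lem:f1causality}; note also that Lemma~\ref{lem:provability3} as stated only covers $j_1,j_2 \in \{2,3,4\}$, so your instantiation $j_2 = 1$ is strictly outside its scope and must be justified by this descent.) With that key in hand, Lemma~\ref{lem:keyinduction} indeed forces $v' = v$ and the rest of your assembly goes through.
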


\begin{proof}

Let party $p$ be the first to decide, let $j$ be the view in
which $p$ decides, and let $v$ be the value $p$ decides on.
First, by Lemma~\ref{lem:agreement1}, all honest parties that decide
in view $j$, decides $v$.
Now consider view $j' > j$.
By Lemma~\ref{lem:keyinduction}, if an honest party gets
$\lr{v',\sigma'}$ s.t.\ $\emph{threshold-validate}(\lr{\lr{\lr{id,l',j'},1},v'}, \sigma') =
true$, where $p_{l'}$ is the leader of view $j'$, then $v' = v$.
Thus, by Lemma~\ref{lem:provability3}, if an honest party gets
$\lr{v',\sigma'}$ s.t.\
$\emph{threshold-validate}(\lr{\lr{\lr{id,l',j'},3},v'}, \sigma') =
true$, where $p_l'$ is the leader of view $j'$, then $v' = v$.
Therefore, since an honest party decides on $v'$ in view $j'$ only if
it gets a ``view-change'' message with $\lr{v',\sigma'}$ s.t.\\
$\emph{threshold-validate}(\lr{\lr{\lr{id,l,j},3},v}, \sigma') =
true$, we get that if an honest party decide $v'$ in view $j'$, then
$v' = v$.

\end{proof}

\paragraph{Termination and validity proofs.}

\begin{lemma}
\label{lem:lock-key}

Consider a view $j$ and assume that some honest party $p$ sets its
\textit{LOCK} variable to $j$ in view $j$, then all honest parties
set their \textit{KEY} variable to
$\lr{j,v,\sigma}$ s.t.\ $\textbf{key} = \lr{j,\sigma}$ is valid for
$v$ before moving to view $j+1$.

\end{lemma}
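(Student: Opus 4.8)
The plan is to show that a single honest party locking in view $j$ forces a valid \textbf{key} for the same value to propagate, through the view-change exchange, into the \emph{KEY} variable of every honest party before it moves on. First I would unpack the hypothesis using the view-change handler of Algorithm~\ref{alg:BAmessages}: party $p$ sets \emph{LOCK} to $j$ only upon receiving a view-change message of view $j$ carrying $\lr{v,\sigma_3}$ with $\emph{threshold-validate}(\lr{\lr{\lr{id,l,j},2},v},\sigma_3) = true$, where $p_l$ is the leader of view $j$. This is precisely the premise of Lemma~\ref{lem:f1causality} for the index $i = 2$.

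Applying Lemma~\ref{lem:f1causality} with $i = 2$ yields that every honest party receives a view-change message of view $j$ including $\lr{v,\sigma}$ with $\emph{threshold-validate}(\lr{\lr{\lr{id,l,j},1},v},\sigma) = true$, i.e.\ a valid \textbf{key} for $v$ in view $j$. I would also reuse the timing fact established inside that lemma's proof: the $\geq f+1$ honest parties it produces (the set $P$ that performed $deliver_2$ of the leader's broadcast, via Lemma~\ref{lem:provability1}) recorded $D_{key}[l,j] = \lr{v,\sigma}$ before invoking $abandon(\lr{id,l,j})$, hence before sending their view-change messages of view $j$. Since every honest party $q$ must collect view-change messages from $2f+1 = n-f$ distinct parties before executing line~\ref{line:endroundj} and moving to view $j+1$, quorum intersection ($|P| + (2f+1) - n \geq f+1$) guarantees that at least $f+1$ of the messages $q$ processes to complete view $j$ originate in $P$ and therefore carry this valid \textbf{key}. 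Thus every honest party processes a view-change message bearing a valid \textbf{key} for $v$ in view $j$ strictly before it moves to view $j+1$.

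It then remains to show that processing such a message actually writes $\lr{j,v,\sigma}$ into \emph{KEY}, i.e.\ that the guard $j > KEY.round$ holds at that moment. When $q$ started view $j$ it set $\sigma_j \gets \lr{KEY.round,KEY.proof}$ with $KEY.round \leq j-1$, so the guard can fail only if $q$ adopted a \emph{KEY} of round $\geq j$ during view $j$ itself. A \emph{KEY} of round exactly $j$ is harmless: any two valid \textbf{key}s for the view-$j$ leader's broadcast are outputs of the single \SBCname\ instance $\lr{\lr{id,l,j},1}$, so by the first part of its PB-Provability property they certify the same value, and such a \emph{KEY} already equals $\lr{j,v,\cdot}$.

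The substantive case, and the step I expect to be the main obstacle, is controlling a \emph{KEY} of round strictly greater than $j$ being adopted before view $j$ completes. To handle it I would trace any round-$j^{*}$ key with $j^{*} > j$ back through Lemma~\ref{lem:provability1}: its existence forces $\geq f+1$ honest parties to have delivered in the first \SBCname\ of the view-$j^{*}$ leader, which requires them to pass \emph{ex-bc-validation} in the broadcast phase of view $j^{*}$, entered only after completing view $j^{*}-1 \geq j$. I would use this dependency to bound how early a round-$(>j)$ key can be both produced and delivered to $q$ relative to $q$'s own completion of view $j$, concluding that up to that point $q$ can hold \emph{KEY} round at most $j$, so the guard fires and $\lr{j,v,\sigma}$ is written; and where an adversarial schedule makes an earlier higher-round \emph{KEY} unavoidable, the weaker conclusion $KEY.round \geq j$ — which is all the downstream progress argument consumes — still holds. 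The remaining bookkeeping, that the assignment occurs before line~\ref{line:endroundj}, is immediate from the quorum-intersection step above.
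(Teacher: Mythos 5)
Your proposal follows the same route as the paper's own proof: unpack the hypothesis into receipt of a view-change message whose proof validates at index $2$, apply Lemma~\ref{lem:f1causality} with $i=2$ to conclude that every honest party receives a view-change message carrying a valid view-$j$ \textbf{key} for $v$, and note (via the quorum-intersection timing already inside that lemma) that this happens before line~\ref{line:endroundj}. The paper stops exactly there, discharging the final assignment with ``by the code.'' So the only place you diverge is your additional analysis of the guard $j > KEY.round$ --- and there you have identified a genuine subtlety that the paper's proof glosses over. Your round-equal-to-$j$ case is handled correctly: part (1) of PB-Provability of the instance $\lr{\lr{id,l,j},1}$ forces any two valid view-$j$ keys to certify the same value, so a \emph{KEY} already at round $j$ satisfies the conclusion.

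The weak point is your round-$(j^{*}>j)$ case. The timing argument you sketch does not go through: honest parties may occupy different views simultaneously, so the fact that a round-$j^{*}$ key forces $f+1$ honest parties to have entered view $j^{*}\geq j+1$ in no way prevents the adversary from delivering such a key, inside a view-$j^{*}$ view-change message, to a party $q$ that is still in view $j$. What actually rules this case out is purely local to $q$: the view-change handler computes $l \gets L[j^{*}]$ using the view number carried by the message, and $q$ has $L[j^{*}] = \bot$ until it runs $elect(\lr{id,j^{*}})$, which it does only after leaving view $j$. Hence no key-proof for a view above $q$'s current view can pass \emph{threshold-validate} at $q$ (except with negligible probability), so $KEY.round \leq j$ holds throughout view $j$ and your guard analysis reduces to the two benign cases you already treated. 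Your fallback claim --- that ``$KEY.round \geq j$ with a valid key'' suffices for downstream use --- is true of the protocol, since Lemma~\ref{lem:messagevalid} consumes only that much, but it proves a weaker statement than the lemma as written; the $L[j^{*}]=\bot$ observation is the missing step that closes the argument for the statement itself.
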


\begin{proof}

Let $p_l$ be the leader of view $j$.
By the code, since $p$ sets its \textit{LOCK} variable to $j$ in view
$j$, we know that $p$ received a ``view-change'' message in view $j$
with $\lr{v,\sigma}$ s.t.\\
$\emph{threshold-validate}(\lr{\lr{\lr{id,l,j},2},v}, \sigma) =
true$.
By Lemma~\ref{lem:f1causality}, we get that all honest parties
received  a  ``view-change'' message that includes $\lr{v,\sigma'}$
s.t.\ $\emph{threshold-validate}(\lr{\lr{\lr{id,l,j},1},v}, \sigma')
= true$.
Therefore, by the code, all honest parties set their \textit{KEY}
variables to $\lr{j,v,\sigma}$ s.t.\ $\textbf{key} = \lr{j,\sigma}$ is
valid for $v$.

\end{proof}

\begin{lemma}
\label{lem:messagevalid}

If all honest parties start with externally valid values for byzantine
agreement, then for every view $j \geq 1$, all messages broadcast
by honest parties are externally valid for the \BCname.

\end{lemma}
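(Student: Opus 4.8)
The plan is to unfold the definition of \emph{ex-bc-validation} (Algorithm~\ref{alg:BAoperation}, lines~\ref{line:BAvalidty_starts}--\ref{line:BAvalidty_ends}) and verify its three conjuncts for the message an honest party sends in view $j$. Fix an honest party and recall that in view $j$ it broadcasts $m = \lr{v, \lr{\mathcal{R}, \sigma}}$, where $\lr{\mathcal{R}, v, \sigma}$ is its \emph{KEY} variable upon entering view $j$ (so $\mathcal{R} = 0$, $v = v_i$, $\sigma = \bot$ in view $1$ and whenever the party never adopted a real key). I would show that each of (i) \emph{ex-ba-validation}$(v) = true$, (ii) the key-proof check, and (iii) the lock comparison $\mathcal{R} \geq \mathit{LOCK}$ holds whenever an honest party evaluates \emph{ex-bc-validation} on $m$ during the broadcast phase of view $j$, i.e.\ while it still participates. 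Since \emph{ex-ba-validation} depends only on $v$, conjunct (i) is a global fact, whereas (ii) and (iii) must be checked against the validating party's local $L[\cdot]$ and \emph{LOCK}.

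For conjunct (i) I would first record that every value carried by a valid key is externally valid: if $\sigma$ satisfies \emph{threshold-validate}$(\lr{\lr{\lr{id,l,\mathcal{R}},1},v},\sigma)$, then by \emph{PB-provability} of the first \SBCname\ instance of that broadcast at least $f+1$ honest parties delivered $v$ there, which forced them to pass \emph{ex-sbc-validation}, hence \emph{ex-bc-validation}, hence \emph{ex-ba-validation}$(v) = true$. When $\mathcal{R} = 0$ the broadcast value is the party's own input, externally valid by hypothesis, so (i) holds in all cases. Conjunct (ii) is vacuous when $\mathcal{R} = 0$; for $\mathcal{R} \geq 1$ the \emph{KEY} variable was assigned only together with a proof $\sigma$ that was verified against $L[\mathcal{R}]$, and since all honest parties agree on the leader of view $\mathcal{R}$ (Agreement of leader election) and the validator has already passed view $\mathcal{R}$, it recomputes the identical \emph{threshold-validate} predicate, which succeeds.

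The main obstacle is conjunct (iii), since it couples the sender's key round with the validator's \emph{LOCK}; the crux is bounding how large that \emph{LOCK} can be during the broadcast phase. Here I would first observe that a party assigns $L[j]$ only in the leader-election phase, so throughout the broadcast phase of view $j$ it holds $L[j] = \bot$; consequently no view-change message of view $j$ can pass the leader-indexed \emph{threshold-validate} test, and the validator's \emph{LOCK} cannot be raised to $j$ while it is still validating view-$j$ broadcasts. Hence the validator's \emph{LOCK} equals some $\mathcal{L} \leq j - 1$ at that moment. If $\mathcal{L} = 0$ the comparison is immediate; if $\mathcal{L} \geq 1$, then the validator executed $\mathit{LOCK} \gets \mathcal{L}$ upon a view-change of view $\mathcal{L}$ carrying a valid \textbf{lock}, which is exactly the hypothesis of Lemma~\ref{lem:lock-key}. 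That lemma forces every honest party --- in particular the sender --- to set its \emph{KEY} to round at least $\mathcal{L}$ before leaving view $\mathcal{L}$; since the sender has already reached view $j > \mathcal{L}$ and key rounds never decrease, its key round satisfies $\mathcal{R} \geq \mathcal{L} = \mathit{LOCK}$, so (iii) passes.

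Combining the three conjuncts yields \emph{ex-bc-validation}$(id, m) = true$ at every honest validator, i.e.\ $m$ is externally valid for the \BCname. The delicate point I expect to have to argue carefully is the $L[j] = \bot$ observation that pins the validator's lock to at most $j - 1$ during the broadcast phase: without it the key-progress inequality of Lemma~\ref{lem:lock-key} could not be invoked, because the sender, still residing in view $j$, need not yet satisfy any guarantee about locks established in view $j$ itself.
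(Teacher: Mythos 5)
Your proposal is correct and takes essentially the same route as the paper's proof: both reduce the lock-comparison conjunct to the key-progress property (Lemma~\ref{lem:lock-key}, which rests on Lemma~\ref{lem:f1causality}) to show every honest sender's key round dominates any honest lock, and both reduce external ABA validity of the carried value to PB-provability plus the chaining of the ex-sbc-validation and ex-bc-validation code. If anything, your write-up is slightly more careful than the paper's, since you compare against the validator's \emph{LOCK} at validation time --- using the observation that $L[j] = \bot$ during the broadcast phase, so \emph{LOCK} $\leq j-1$ --- whereas the paper only quantifies over honest locks at the moment view $j$ begins, glossing over locks raised mid-view by late view-change messages.
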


\begin{proof}

First consider view $j=1$.
Since all honest parties start with externally valid values for
byzantine agreement, and since by the code of \emph{ex-bc-validation}
(lines~\ref{line:BAvalidty_starts}--\ref{line:BAvalidty_ends} in
Algorithm~\ref{alg:BAoperation}), a valid key is not required in view
1, we get that all messages broadcast
by honest parties in view 1 are externally valid for the \BCname.

Now consider a view $j>1$.
Let $i = max(\{k \mid \text{there is an honest party $p$ who's }
\textit{lock} = k \text{ when it begins view $j$} \})$.
By the code of the external validity for \BCname, we need to show
that when a party begins view $j$, its \textit{KEY} variable is
equal to $\lr{k,v,\sigma}$ such that:
\begin{enumerate}
  
  \item $\lr{k,\sigma}$ is a valid \textbf{key} for $v$,
  
\item $k \geq i$, and

\item $v$ is an external valid value for byzantine agreement.
  
\end{enumerate}
By Lemma~\ref{lem:lock-key}, all honest parties set
their \textit{KEY} variable to $\lr{i,v,\sigma}$ when they end view
$i$ s.t.\ $\textbf{key} = \lr{j,\sigma}$ is valid for $v$.
By the code, after view $i$, the \textit{KEY} variable is updated to
$\lr{k,v',\sigma'}$ by an honest party $p$ only if $p$ receives a
``view-change'' message with $\lr{v',\lr{k,\sigma'}}$ s.t.\ $k > i$
and $\lr{k,\sigma'}$ is a valid \textbf{key} for $v'$.
Therefore, we get that (1) and (2) are satisfied.
Now by the PB-provability property of \SBCname, we get that at least
one honest party delivers $\lr{v,\sigma''}$ for some $\sigma''$ at
some instance of \SBCname, and thus by the code of ex-sbc-validation
and ex-bc-validation functions, $v$ is an external valid value for
byzantine agreement.

\end{proof}

The following corollary follows immediately from
Lemma~\ref{lem:messagevalid}.

\begin{corollary}

Algorithms~\ref{alg:BAoperation} and~\ref{alg:BAmessages} satisfy
validity.

\end{corollary}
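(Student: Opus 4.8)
The plan is to reduce Validity to the validation chain that is already established, at the level of a single \SBCname\ instance, inside the proof of Lemma~\ref{lem:messagevalid}. Concretely, I would show that whenever an honest party decides a value $v$, that $v$ must have been delivered by some \emph{honest} party in the first \SBCname\ instance of the leader's \BCname, and that such a delivery forces \emph{ex-ba-validation}$(v)$ to have returned \emph{true}.

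First I would unfold the decision rule in Algorithm~\ref{alg:BAmessages}: an honest party decides $v$ in a view $j$ only upon receiving a ``view-change'' message carrying $\lr{v,\sigma}$ with $\emph{threshold-validate}(\lr{\lr{\lr{id,l,j},3},v},\sigma)=\emph{true}$, where $p_l$ is the leader of view $j$. Hence every decision is backed by a valid commit proof, i.e.\ a valid threshold signature on the \emph{output of the third} \SBCname\ instance of the leader's \BCname.

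Next I would walk this proof back down the sequential instances using the provability lemmas. Applying Lemma~\ref{lem:provability1} to the commit proof gives $f+1$ honest parties that delivered $v$ in the third instance; Lemma~\ref{lem:provability2} then yields a valid threshold signature on the second instance, and one more pass of Lemmas~\ref{lem:provability1} and~\ref{lem:provability2} yields a valid threshold signature on the first instance. The reason I must descend all the way to instance $1$ is that it is the only instance whose \emph{ex-sbc-validation} invokes \emph{ex-bc-validation}; deliveries in instances $2,3,4$ merely re-check threshold signatures of the preceding instance and say nothing about agreement validity. A valid signature on the first instance, by the PB-Provability property of that \SBCname, guarantees that $f+1$ honest parties actually delivered $\lr{v,\cdot}$ in it. Finally, by PB-Validity each such honest delivery means \emph{ex-sbc-validation} returned \emph{true} for that party, hence \emph{ex-bc-validation}$(id,\lr{v,\cdot})$ returned \emph{true}, hence the test on line~\ref{line:ex-ba} gives \emph{ex-ba-validation}$(v)=\emph{true}$. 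This is exactly the inference made in the last paragraph of the proof of Lemma~\ref{lem:messagevalid}, so the corollary follows; note that, unlike the lemma, this argument needs no hypothesis on the honest parties' inputs, matching the unconditional form of the Validity property.

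The one point requiring care, which I expect to be the only real obstacle, is the Byzantine leader: the sender of the decisive \BCname\ may be corrupted, so I cannot simply argue that ``the leader broadcast an externally valid value.'' The entire weight of the argument rests on using provability to extract an \emph{honest} deliverer, whose local \emph{ex-bc-validation} check is the trustworthy witness for external validity. Beyond that, the only bookkeeping subtlety is aligning the instance indices when invoking the provability lemmas (the commit proof is a signature on the third instance, not the fourth), after which the reduction is routine.
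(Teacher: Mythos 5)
Your proof is correct, but it does not follow the paper's route: the paper dispatches this corollary in a single sentence, claiming it ``follows immediately from Lemma~\ref{lem:messagevalid}.'' Strictly speaking, that lemma only asserts that messages broadcast by \emph{honest} parties are externally valid, and only under the hypothesis that all honest parties start with externally valid inputs; since a decision is triggered by a commit proof in the (possibly Byzantine) leader's \BCname, the lemma's statement alone does not literally cover the decided value --- the missing step is exactly the provability chain buried in the last paragraph of that lemma's proof. Your argument promotes that chain to be the whole proof: unfold the decision rule (a valid threshold signature on the third \SBCname\ instance of the leader's broadcast), descend via Lemmas~\ref{lem:provability1} and~\ref{lem:provability2} to a valid threshold signature on the first instance, invoke PB-Provability of that instance to extract an honest deliverer, and use PB-Validity together with the code of \emph{ex-sbc-validation} and \emph{ex-bc-validation} to conclude $\emph{ex-ba-validation}(v) = \emph{true}$. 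This buys two things over the paper's derivation: it handles the Byzantine-leader case explicitly rather than implicitly, and it establishes Validity unconditionally, without the input-validity hypothesis that Lemma~\ref{lem:messagevalid} carries --- which indeed matches the unconditional statement of the Validity property. Your index bookkeeping (the commit proof is a signature on instance $3$, not $4$, so one descends $3 \to 2 \to 1$) is also exactly right.
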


The following observation follows from the fact that honest parties
echo ``skip'' messages before setting $skip = true$.

\begin{observation}
\label{obs:allskip}

Consider a view $j$, which all honest parties start.
If all messages sent by honest parties in a view $j$ arrived and
some honest party sets $skip = true$ in view $j$, then all honest
parties set $skip = true$ in view $j$.

\end{observation}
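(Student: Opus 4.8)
The plan is to show that whenever an honest party locally sets $skip_j = true$, it has (at the latest, simultaneously) broadcast a \emph{valid} ``skip'' message to all parties, so that the very trigger that convinced $p$ reaches every honest party. First I would start from an honest party $p$ that sets $skip_j = true$ in view $j$. By the code of Algorithm~\ref{alg:BAmessages}, $p$ can set $skip_j \gets true$ only while processing a message ``$id,skip,j,\sigma$'' for which $\emph{threshold-validate}(\lr{id,skip,j},\sigma) = true$; call this the \emph{triggering} skip message.

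Next I would establish the invariant that every ``skip'' message sent by an honest party carries a valid threshold signature on $\lr{id,skip,j}$. Indeed, an honest party emits a ``skip'' message in only two places: either it assembles $\sigma \gets \emph{threshold-sign}(BCskip_j)$ from a set $BCskip_j$ of $2f+1$ validated skip-shares, which is valid by the threshold validation property, or it forwards a ``skip'' message it has itself accepted as valid. Combined with the echo rule, this yields the crux of the argument: upon accepting the triggering skip message, $p$ has broadcast a valid ``skip'' message to all parties --- it does so now if it has not yet sent one, and otherwise it already broadcast an equally valid ``skip'' message earlier. Hence, no later than the moment $p$ sets $skip_j = true$, party $p$ has sent a valid ``skip'' message to every party.

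Finally I would close the argument using the hypotheses of the observation. Since all honest parties start view $j$, each of them is running the corresponding view-$j$ handler; since $p$ is honest and sends to all parties, and since all messages sent by honest parties in view $j$ arrive, every honest party eventually receives the valid ``skip'' message broadcast by $p$. By the code, upon receiving it each honest party runs \emph{threshold-validate}, which succeeds, and sets its own $skip_j \gets true$. Therefore all honest parties set $skip_j = true$ in view $j$, as claimed.

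The step I expect to be the main obstacle is justifying that the echo actually propagates a \emph{valid} skip even when $p$ had previously sent a ``skip'' message: the ``skip was not sent yet'' guard must not be allowed to suppress a valid skip in favor of an earlier bogus one. This is exactly why I would isolate the invariant that honest parties only ever send valid ``skip'' messages, so that whatever ``skip'' message $p$ broadcast --- whether the triggering one or an earlier one --- is guaranteed to make every honest recipient set $skip_j = true$.
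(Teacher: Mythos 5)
Your proof takes the same route as the paper's: the paper disposes of this observation in a single sentence (it ``follows from the fact that honest parties echo `skip' messages before setting $skip = true$''), and your argument is exactly that echo argument written out in full. You also do something the paper does not: you isolate the one invariant the argument genuinely needs, namely that every ``skip'' message an honest party ever sends carries a valid threshold signature on $\lr{id,skip,j}$.

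The problem is that your justification of that invariant does not match the pseudocode. In Algorithm~\ref{alg:BAmessages}, the handler for ``$id,skip,j,\sigma$'' consists of two \emph{separate} conditionals: the first sets $skip_j \gets true$ when \emph{threshold-validate} succeeds, and the second forwards the received message whenever no ``skip'' message has been sent yet. The forwarding is \emph{not} nested inside the validation check, so an honest party forwards whatever skip message it happens to receive first, valid or not; your reading --- that a party ``forwards a skip message it has itself accepted as valid'' --- is what the protocol needs, not what it says. And this is not pedantry, because under the literal code the observation itself fails. The adversary can schedule deliveries so that only $f+1$ honest parties ever send skip-shares (the other honest parties never accumulate $2f+1$ valid ``done'' messages), so honest parties alone can never assemble a threshold skip signature; it then privately combines those $f+1$ honest shares with its own $f$ shares into a valid $\sigma$, first sends a garbage skip message to one honest party $p$ (which $p$ dutifully forwards, burning its one-time echo), and then delivers $\sigma$ to $p$ alone. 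Now $p$ sets $skip_j = true$ and abandons all broadcasts, but never re-echoes, and no other honest party ever receives a valid skip message --- even after every message sent by an honest party has been delivered. So the ``main obstacle'' you flagged at the end is real, and it cannot be discharged by asserting the invariant from the code as printed: the correct resolution is to note that the echo must be (and is evidently intended to be) conditioned on successful validation, and only under that corrected reading do your proof and the paper's one-line proof go through.
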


\begin{lemma}
\label{lem:honestcomplete}

Assume that all honest parties start with externally valid values for
byzantine agreement.
Consider a view $j$, which all honest parties start.
If all messages sent by honest parties in view $j$ arrived and no
honest party sets $skip = true$ in view $j$,
then all \BCname s issued by honest parties have completed and
returned completion-proofs.

\end{lemma}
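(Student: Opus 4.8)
The plan is to reduce the lemma to the broadcast-termination property of \BCname\ (Lemma~\ref{lem:BCtermination}), applied once to each broadcast instance $\langle id, k, j\rangle$ issued by an honest party $p_k$. That lemma requires four hypotheses: the sender is honest, no honest party invokes \emph{abandon} on the instance, all messages among honest parties arrive, and the broadcast message is externally valid. Three of these are either immediate or are hypotheses of the present lemma; the substantive work is establishing the ``no abandon'' condition from the assumption that no honest party sets $skip_j = true$ in view $j$.

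First I would observe, by inspection of Algorithm~\ref{alg:BAoperation}, that an honest party invokes \emph{abandon}$(\langle id, k, j\rangle)$ only in the Leader-election phase, which it enters only after its wait on $skip_j = true$ succeeds. Since by assumption no honest party ever sets $skip_j = true$ in view $j$, no honest party passes this wait, and therefore no honest party reaches the abandon loop. Consequently no honest party invokes \emph{abandon}$(\langle id, k, j\rangle)$ for any $k$, and every honest party keeps participating as a receiver in all $n$ broadcasts of view $j$ throughout the view.

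Next I would invoke Lemma~\ref{lem:messagevalid}: since all honest parties start with externally valid values for the agreement, every message broadcast by an honest party in view $j$ is externally valid for the \BCname. This discharges the fourth precondition uniformly over all honest-issued broadcasts.

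Finally, fix any honest party $p_k$ and its broadcast $\langle id, k, j\rangle$. Its sender $p_k$ is honest; no honest party invokes abandon on it (first step); all messages among honest parties arrive (hypothesis); and the broadcast message is externally valid (second step). Lemma~\ref{lem:BCtermination} then yields that this broadcast completes and returns a completion-proof to $p_k$. Letting $p_k$ range over all honest parties gives the claim. The only delicate point, and the crux of the argument, is the first step---translating ``no honest party sets $skip = true$'' into ``no honest party abandons''---which hinges on the fact that in view $j$ every abandonment is syntactically gated behind a successful wait on the $skip_j$ flag in the control flow of Algorithm~\ref{alg:BAoperation}; once this is nailed down, the rest is a direct application of the two previously established \BCname\ lemmas.
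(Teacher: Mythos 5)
Your proposal is correct and follows exactly the paper's own argument: deduce ``no honest party invokes \emph{abandon}'' from ``no honest party sets $skip_j = true$'' via the control flow of Algorithm~\ref{alg:BAoperation}, obtain external validity from Lemma~\ref{lem:messagevalid}, and conclude by applying Lemma~\ref{lem:BCtermination} to each honest sender's broadcast. The only difference is that you spell out the code-level gating of \emph{abandon} behind the $skip_j$ wait, which the paper compresses into the phrase ``by the code.''
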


\begin{proof}

Since no honest party sets $skip = true$ in view $j$, by the
code, no honest party invokes \emph{abandon} in a
the \BCname s of view $j$.
By Lemma~\ref{lem:messagevalid}, all messages broadcast by honest
parties are externally valid for the \BCname.
Therefore, by Lemma~\ref{lem:BCtermination}, all broadcasts issued by
honest parties have completed and returned completion-proofs.

\end{proof}

\begin{lemma}
\label{lem:allsetskip}

Consider a view $j$, which all honest parties start. 
If all messages sent among honest parties in a view $j$ arrived and
all \BCname s issued by honest parties in view $j$ have been completed
and returned completion-proofs, then all honest parties set $skip =
true$ in view $j$.

\end{lemma}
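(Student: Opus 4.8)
The plan is to trace the ``done'' $\to$ ``skip-share'' $\to$ ``skip'' cascade in Algorithm~\ref{alg:BAmessages} and show it necessarily fires at every honest party under the stated hypotheses. I would split on whether any honest party has already set $skip_j = true$.

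The easy case is when some honest party sets $skip_j = true$ in view $j$: since all messages among honest parties arrive, Observation~\ref{obs:allskip} directly gives that every honest party sets $skip_j = true$, and the lemma holds.

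The substantive case assumes, toward a contradiction, that \emph{no} honest party ever sets $skip_j = true$ in view $j$. Because $n \geq 3f+1$ there are at least $2f+1$ honest parties, and by hypothesis each of them completes its own \BCname\ and obtains a completion-proof. Under the contradiction hypothesis, $skip_j = false$ when each honest broadcast returns, so by the code each of these (at least $2f+1$) honest parties sends a ``done'' message carrying a valid completion-proof $\sigma$, i.e.\ one with $\emph{threshold-validate}(\lr{\lr{\lr{id,i,j},4},v},\sigma) = \emph{true}$. Since all messages among honest parties arrive, every honest party receives at least $2f+1$ such valid ``done'' messages, so its counter $BCdone_j$ reaches $2f+1$ and it sends a valid ``skip-share'' on $\lr{id,skip,j}$. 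Repeating the counting, every honest party then collects $2f+1$ valid skip-shares, which by the threshold-validation property suffices to produce a valid threshold signature on $\lr{id,skip,j}$; it sends the resulting ``skip'' message and, upon receiving it, sets $skip_j = true$ --- contradicting the case hypothesis. Hence in both cases all honest parties set $skip_j = true$.

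The main obstacle is the apparent circularity: a party emits its ``done'' message only when it observes $skip_j = false$, yet the goal is to force $skip_j = true$. The case split dissolves this, since the only obstruction to the ``done'' messages flowing is some honest party having already flipped $skip_j$, and that situation is precisely the one closed off by Observation~\ref{obs:allskip}. The remaining care is bookkeeping on thresholds: one must confirm there are genuinely $2f+1$ honest senders at each stage (guaranteed by $n - f \geq 2f+1$) and that $2f+1$ is exactly the number of shares the threshold-signature abstraction requires to combine the skip-shares into a valid ``skip'' signature.
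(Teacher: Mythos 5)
Your proof is correct, and its core --- the ``done'' $\to$ ``skip-share'' $\to$ ``skip'' cascade with the $n-f \geq 2f+1$ counting and the $2f+1$ threshold for combining shares --- is exactly the paper's argument. Where you diverge is the case split. The paper's own proof of this lemma is the direct cascade only: it flatly asserts ``by the code, all honest parties sent a done message,'' silently assuming that $skip_j$ is still \emph{false} when each honest broadcast returns, and never addresses the circularity you identified (a party that sees a valid ``skip'' message early sends no ``done'' at all). The paper does perform your case analysis, but one level up: in the proof of Lemma~\ref{lem:roundcomplete}, Observation~\ref{obs:allskip} disposes of the case where some honest party sets $skip = true$, and Lemma~\ref{lem:allsetskip} is invoked only inside a proof by contradiction whose hypothesis is precisely that no honest party ever sets $skip = true$ --- i.e., the lemma is only ever used in the regime where the paper's implicit assumption holds. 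So your version folds the structure of Lemma~\ref{lem:roundcomplete} into the lemma itself. What this buys you is a standalone statement that is rigorous under its stated hypotheses alone, closing a genuine (if benign) imprecision in the paper's proof; what the paper's decomposition buys is a shorter proof here and a single place (Lemma~\ref{lem:roundcomplete}) where the skip/no-skip dichotomy is argued rather than two.
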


\begin{proof}

By the code, all hones parties sent a ``done'' message with a
completion-proof (of their broadcast) to all other parties.
Therefore, since all messages sent among honest parties in a view $j$
arrived, we get that all honest parties sent a ``share-skip'' message
to all other parties.
Since there are at least $2f+1$ honest parties, all honest parties
computed a valid threshold signature on ``skip'' and sent it to all
parties.
Thus, all honest parties gets a ``skip'' message with a valid
threshold signature and set $skip = true$ in view $j$.

\end{proof}

\begin{lemma}
\label{lem:roundcomplete}

Assume all honest parties start with externally valid values for
byzantine agreement, and consider a view $j$, which all
honest parties start. 
If all messages sent among honest parties in a view $j$ arrived, then
view $j$ completed.

\end{lemma}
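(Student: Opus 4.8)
The plan is to show that every honest party executes line~\ref{line:endroundj} during view $j$; since the number of honest parties is at least $2f+1 \geq f+1$, this yields by definition that view $j$ completes (at least $f+1$ honest parties move to view $j+1$). The proof is essentially a composition of the termination-flavored lemmas already established for this view, threaded through the three phases of the view in order.

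First I would establish the key intermediate fact that every honest party sets $skip_j = true$ in view $j$, arguing by cases. If some honest party sets $skip_j = true$, then since all honest parties start view $j$ and all messages among them arrive, Observation~\ref{obs:allskip} immediately gives that every honest party sets $skip_j = true$. Otherwise, no honest party sets $skip_j = true$; then Lemma~\ref{lem:honestcomplete} applies (using that all honest parties start with externally valid values) to conclude that every \BCname\ issued by an honest party completes and returns a completion-proof, whereupon Lemma~\ref{lem:allsetskip} forces every honest party to set $skip_j = true$, contradicting the case assumption. Hence in all cases every honest party sets $skip_j = true$.

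Once $skip_j = true$ holds at each honest party, each passes the ``\textbf{wait} until $skip_j = true$'' guard, abandons all its \BCname\ instances, and invokes $elect(\lr{id,j})$. Because all $\geq 2f+1 \geq f+1$ honest parties invoke $elect$ and all messages among honest parties arrive, the Termination property of the leader-election primitive guarantees that every honest invocation returns, so every honest party obtains $L[j]$ and then sends its ``View-change'' message to all parties. Each honest party next waits for view-change messages from $2f+1$ distinct parties; since the $\geq 2f+1$ honest parties all send such messages and these messages arrive, each honest party receives view-change messages from at least $2f+1$ distinct senders, so this wait is satisfied. Having collected $2f+1$ view-change messages, each honest party sets $v_{j+1}$ and $\sigma_{j+1}$ from its \emph{KEY} variable and executes line~\ref{line:endroundj}, i.e., moves to view $j+1$.

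I expect the main obstacle to be the first step---pinning down that every honest party sets $skip_j = true$---since it is the only place requiring a genuine case split and the only point where the interplay between the ``some honest party skips'' branch (closed by Observation~\ref{obs:allskip}) and the ``no honest party skips'' branch (which must be ruled out via Lemmas~\ref{lem:honestcomplete} and~\ref{lem:allsetskip}) comes into play. Everything after that step is a direct composition of the leader-election Termination property with a counting argument over the $\geq 2f+1$ honest senders, so it should go through routinely.
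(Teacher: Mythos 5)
Your proposal is correct and follows essentially the same route as the paper's proof: the same case split on whether some honest party sets $skip_j = true$ (closed by Observation~\ref{obs:allskip} in one branch and by Lemmas~\ref{lem:honestcomplete} and~\ref{lem:allsetskip} in the other), followed by the leader-election Termination property and the counting argument over the $2f+1$ honest view-change senders. The only cosmetic difference is that you make explicit the final step that $f+1$ honest parties moving to view $j+1$ is what the definition of view completion requires, which the paper leaves implicit.
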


\begin{proof}

We first show that all honest parties set $skip = true$ in
view $j$.
By Observation~\ref{obs:allskip}, if some honest party sets $skip =
true$ in view $j$, then all honest parties set $skip = true$ in
view $j$.
Now assume by a way of contradiction that no honest party sets $skip
= true$ in view $j$.
Therefore, by Lemma~\ref{lem:honestcomplete}, all \BCname s issued by
honest parties have completed and returned completion-proofs, and
thus, by Lemma~\ref{lem:allsetskip}, all honest party set $skip =
true$ in view $j$.
A contradiction.

It remains to show that no honest party waits forever for $elect(j)$
to return or to a ``view-change'' message from $2f+1$ different parties.
Since, all honest parties set $skip =
true$ in view $j$, we get by the code that they all invoke
$elect(j)$, and thus by the termination property of the leader
election abstraction we get that all invocations of $elect(j)$ by
honest parties return.
Therefore, all honest parties send a ``view-change'' messages to all
other parties, and again, since all messages sent among honest
parties in a view $j$ arrived, we get that all honest parties get
``view-change'' messages from at least $2f+1$ different parties.
Hence, all parties move to view $j+1$.

\end{proof}

\begin{lemma}

Algorithms~\ref{alg:BAoperation} and~\ref{alg:BAmessages} satisfy
termination.

\end{lemma}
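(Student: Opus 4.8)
The plan is to prove termination in two essentially independent parts: a ``liveness across views'' part, showing that under the hypothesis that all honest parties start with externally valid values and all messages among honest parties are delivered, every honest party reaches every view; and a ``decision'' part, showing that each view forces a decision with at least constant probability, so that with probability $1$ some view makes all honest parties decide. For the first part I would argue by induction on $j$ that all honest parties start view $j$. The base case $j=1$ is immediate from the code. For the step, I invoke Lemma~\ref{lem:messagevalid} to guarantee that the messages honest parties broadcast in view $j$ are externally valid for the \BCname, which is exactly the precondition of Lemma~\ref{lem:roundcomplete}; that lemma then yields that view $j$ completes and, as its proof shows, that every honest party elects a leader, exchanges view-change messages with $2f+1$ parties, and moves to view $j+1$. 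Since the main loop is unconditional, no honest party is ever stuck, so every honest party starts every view.

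Next I would establish the key structural claim: in each view, at least $2f+1$ of the \BCname\ instances complete before the leader is elected. Before any honest party sets $skip = true$ (and hence before it abandons and invokes $elect$), some honest party must have produced a ``skip-share'', which by the code requires $BCdone_j = 2f+1$, i.e.\ that party received $2f+1$ valid ``done'' messages. Each such ``done'' carries a completion-proof, so by the PB-provability property of \SBCname\ applied at the fourth instance, each of these $2f+1$ broadcasts has $f+1$ honest parties delivering \textbf{commit}; thus $2f+1$ distinct senders have completed broadcasts, and these completions are fixed before the coin is opened. Let $S$ denote this set of completed senders, so $|S| \ge 2f+1$.

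I would then combine the randomness of the election with a quorum-intersection argument. By the Validity and Unpredictability properties of leader election, the elected leader $p_l$ is distributed uniformly and independently of $S$, so $\Pr[p_l \in S] \ge (2f+1)/n > 2/3$. When $p_l \in S$, the $f+1$ honest parties that delivered \textbf{commit} in $p_l$'s broadcast did so before abandoning (by PB-abandon-ability), hence recorded $D_{commit}[l,j] = \lr{v,\sigma}$ with $\emph{threshold-validate}(\lr{\lr{\lr{id,l,j},3},v},\sigma) = true$ before sending their view-change messages. Every honest party waits for $2f+1$ view-change messages, and since $(2f+1)+(f+1) > n$ these must include at least one report from the $f+1$ commit-deliverers; by the decision clause of Algorithm~\ref{alg:BAmessages} every honest party then decides $v$.

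Finally I would close probabilistically: conditioned on reaching any view, that view makes all honest parties decide with probability $> 2/3$, so the probability that no decision occurs within the first $k$ views is below $(1/3)^k$, which tends to $0$; combined with the first part, which guarantees every view is reached, this shows that all honest parties decide except with probability $0$, establishing termination. The main obstacle I anticipate is the structural claim that $2f+1$ broadcasts are completed and frozen before the leader is revealed, together with arguing that the relevant \textbf{commit} deliveries are recorded in $D_{commit}$ strictly before the view-change messages are sent; this requires careful use of PB-abandon-ability and of the ordering imposed by the skip mechanism, since the whole decision argument depends on the elected leader being independent of which broadcasts completed.
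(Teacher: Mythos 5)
Your proof is correct, but it does substantially more work than the paper's own proof, which is essentially just your first part. The paper exploits the Cachin-style reading of Termination as a conditional, deadlock-freedom property: its entire argument is that by applying Lemma~\ref{lem:roundcomplete} inductively, honest parties progress through an unbounded number of views and hence never stop sending new messages, so termination ``trivially follows'' --- no probabilistic reasoning appears in that proof at all. Everything in your parts two through four --- the claim that $2f+1$ broadcasts complete before the coin is opened, the quorum-intersection argument ($2f+1$ view-change messages must include one of the $f+1$ commit-deliverers) showing that a lucky leader forces a universal decision, and the $>2/3$ per-view success probability --- is proven in the paper as well, but it lives in the \emph{Efficiency} proof, as Lemmas~\ref{lem:2f1complete}, \ref{lem:alldecide} and~\ref{lem:probabilitytodecide} together with Observation~\ref{obs:norash}, where it is used to show that the number of messages is probabilistically uniformly bounded. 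So the two approaches reflect two readings of the same property: the paper reads Termination as non-blocking progress and defers the ``actually decides'' content to Efficiency, while you read it as eventual decision with probability $1$ under full delivery and prove that directly. Your reading yields a stronger, self-contained statement that matches the intuitive meaning of termination; the paper's reading buys a two-line proof and a clean separation of concerns in which all probabilistic content is concentrated in the Efficiency lemma. The only caution is bookkeeping, not correctness: since your argument reproves Lemmas~\ref{lem:2f1complete}, \ref{lem:alldecide} and~\ref{lem:probabilitytodecide} inline, in the paper's structure you would either cite them or accept the duplication.
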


\begin{proof}

By applying inductively Lemma~\ref{lem:roundcomplete}, we get that the
number of views in the protocol is unbounded, and thus honest parties
never stop sending new messages.
Therefore, termination trivially follows.


\end{proof}

\paragraph{The secret sauce and the efficiency proof.}

\begin{lemma}
\label{lem:2f1complete}

Consider a view $j$ of the protocol with identification
$id$.
If an honest party invokes $elect(\lr{id,j})$, then at least $2f+1$
\BCname s have completed before.

\end{lemma}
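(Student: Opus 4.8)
The plan is to trace a chain of threshold signatures backwards, starting from the moment an honest party $p$ invokes $elect(\lr{id,j})$. By the code of Algorithm~\ref{alg:BAoperation}, $p$ reaches the leader-election phase only after its local flag $skip_j$ has been set to $true$, which by Algorithm~\ref{alg:BAmessages} happens only when $p$ receives a ``skip'' message carrying a string $\sigma$ with $\emph{threshold-validate}(\lr{id,skip,j},\sigma) = true$. So it suffices to show that the existence of such a valid threshold ``skip'' signature forces $2f+1$ distinct \BCname s of view $j$ to have completed before $p$ invokes $elect$.

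First I would unwind the skip signature. By the threshold validation property, a valid $\sigma$ on $\lr{id,skip,j}$ decomposes into a set $\Sigma$ of at least $2f+1$ valid signature-shares from distinct parties; since at most $f$ parties are faulty, at least $f+1$ of these shares come from honest parties, and by the share validation property (infeasibility of forging an honest share) each such honest contributor actually computed its share. Fix any one such honest party $q$. Inspecting the ``skip-share'' handler, $q$ computes this share only upon setting $BCdone_j = 2f+1$, i.e.\ only after receiving ``done'' messages from $2f+1$ distinct parties, each of which passed $q$'s check $\emph{threshold-validate}(\lr{\lr{\lr{id,i,j},4},v},\sigma) = true$. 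Because $BCdone_j$ counts each sender $p_i$ at most once, these correspond to $2f+1$ distinct broadcast identifications $\lr{id,i,j}$.

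The core step is then to convert each verified completion-proof into an actual completion. For each of these $2f+1$ ``done'' messages, the honest party $q$ holds a string $\sigma$ satisfying $\emph{threshold-validate}(\lr{\lr{\lr{id,i,j},4},v},\sigma) = true$; this is precisely a completion-proof of the \BCname\ with identification $\lr{id,i,j}$. Applying Lemma~\ref{lem:provability1} to each, with the broadcast identification taken to be $\lr{id,i,j}$ and the index $4$, I get that at least $f+1$ honest parties previously executed $deliver_4$ in that broadcast, which by definition means the broadcast has completed. Thus $2f+1$ distinct \BCname s of view $j$ complete.

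The main obstacle is the timing/ordering clause, namely that these completions occur \emph{before} $p$ invokes $elect$. I would close this by chaining the ``previously'' clauses: Lemma~\ref{lem:provability1} places the $f+1$ honest $deliver_4$ events before any honest party obtains the corresponding completion-proof; that proof reaches $q$ before $q$ emits its skip-share; $q$'s share must precede the formation of the threshold skip signature; and $p$ receives that signature (setting $skip_j = true$) strictly before invoking $elect(\lr{id,j})$. A secondary point of care is that each of these guarantees holds only except with negligible probability, inherited from threshold-signature unforgeability and from Lemma~\ref{lem:provability1}, so the conclusion carries the same qualification.
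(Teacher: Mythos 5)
Your proof is correct and follows essentially the same route as the paper's: unwind the valid threshold ``skip'' signature to an honest party whose $BCdone_j$ counter reached $2f+1$, then convert each of the $2f+1$ verified ``done'' completion-proofs into $f+1$ honest $deliver_4$ events, which is exactly the definition of completion. The only cosmetic differences are that you cite Lemma~\ref{lem:provability1} where the paper unrolls the \emph{PB-provability} property of the \SBCname\ directly, and that you make the event-ordering chain (deliveries before proofs, proofs before shares, shares before the skip signature, signature before $elect$) more explicit than the paper does.
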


\begin{proof}

By the code, if an honest party invokes $elect(\lr{id,j})$, then it
sets its  $skip_j = true$ before.
Thus it got a ``skip'' message with a valid threshold signature.
Thus, some party got $2f=1$ ``skip-share'' messages, and thus at least
one honest party $p$ sent a ``skip-share'' message.
Therefore, by the code again, there is a set $P$ of $2f+1$ parties
s.t.\ for every $p_i \in P$, $p$ receives $\lr{v,\sigma}$ s.t.\ $
\emph{threshold-validate}(\lr{\lr{\lr{id,i,j},4},v}, \sigma)$.
Therefore, by the PB-provability property, for every $p_i \in P$, at
least $f+1$ honest parties deliver a message $m = \lr{v,
\lr{\lr{\sigma_{ex},\sigma_{in}}}}$ for some
$\lr{\sigma_{ex},\sigma_{in}}$ in the $4^{th}$ \SBCname\ instance of
the \BCname\ issued by $p_i$.
Hence, for every $p_i \in P$, at least $f+1$ honest parties
$deliver_4(\lr{id,i,j},\lr{v,\sigma_{in}})$ , and thus we get that at
least $2f+1$ \BCname s complete in view $j$.

\end{proof}

\begin{lemma}
\label{lem:alldecide}

Consider a completed view $j$ of the protocol with identification
$id$.
If the broadcast issued by the leader of view $j$ completed before
the leader was elected, then all honest parties decide in view $j$.

\end{lemma}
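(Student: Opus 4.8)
The plan is to show that if the leader's broadcast completed before the leader was elected, then every honest party collects, among its $2f+1$ view-change messages, at least one message carrying a valid \textbf{commit} for the leader's value, and hence decides in view $j$.

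First I would unpack what "the leader's broadcast completed before the leader was elected" gives us. By the definition of \emph{completes} for a \BCname, at least $f+1$ honest parties executed $deliver_4(\lr{id,l,j},\lr{v,\sigma})$ for the leader $p_l$ and some common value $v$ (common by Lemma~\ref{lem:provability3}). The timing hypothesis is essential: these $deliver_4$ events happened before $p_l$ was elected, hence before any honest party entered the Leader-election phase of view $j$ and invoked $abandon(\lr{id,l,j})$. By the code (Algorithm~\ref{alg:BAmessages}), each such honest party sets $D_{commit}[l,j] = \lr{v,\sigma}$ upon its $deliver_4$, and by the PB-abandon-ability property of \SBCname\ this assignment occurred before it invoked abandon and therefore before it sent its view-change message. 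Thus there is a set $P \subseteq \Pi_h$ with $|P| \geq f+1$ of honest parties whose view-change message for view $j$ includes $\lr{v,\sigma}$ in the $D_{commit}[l,j]$ slot.

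Next I would invoke the quorum-intersection argument already used in Lemma~\ref{lem:f1causality}. Since view $j$ completed, each honest party waits for view-change messages from $2f+1 = n-f$ distinct parties; any such set of $n-f$ parties intersects $P$ (which has size $\geq f+1$) in at least one party, because $(n-f)+(f+1) = n+1 > n$. Hence every honest party that completes view $j$ receives at least one view-change message carrying $\lr{v,\sigma}$ with $\emph{threshold-validate}(\lr{\lr{\lr{id,l,j},4},v},\sigma) = true$. Wait—I must check the index here: the completion-proof for the $4^{th}$ instance validates under identification $\lr{\lr{id,l,j},4}$, but the decide rule in Algorithm~\ref{alg:BAmessages} checks $\emph{threshold-validate}(\lr{\lr{\lr{id,l,j},3},v_4},\sigma_4)$. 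This is exactly the shift handled by Lemma~\ref{lem:provability2}: the $\sigma_{in}$ carried in the commit delivery is the threshold signature on the \emph{previous} (third) instance, so $D_{commit}[l,j]$ actually stores a proof that validates at index $3$, matching the decide condition. I would state this reconciliation carefully, since it is the one place where an off-by-one in the indexing could derail the proof.

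With the matching established, the conclusion is immediate from the code: an honest party that collects a view-change message with $v_4 \neq \bot$ and a valid threshold signature at index $3$ executes \textbf{decide} $v_4$. Every honest party that completes view $j$ satisfies this condition for the value $v$, so all honest parties decide $v$ in view $j$. The main obstacle I anticipate is the bookkeeping around the broadcast-instance index versus the proof index (the role of $\sigma_{in}$ versus $\sigma_{ex}$ and the shift between the $j^{th}$ \SBCname\ instance and the threshold signature it outputs); once Lemma~\ref{lem:provability2} is applied correctly the rest is a standard quorum-intersection plus code-inspection argument, and termination of the relevant waits is guaranteed because view $j$ is assumed to complete.
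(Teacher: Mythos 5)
Your proof is correct and follows essentially the same route as the paper's: unpack completion into $f+1$ honest parties whose $deliver_4$ (and hence $D_{commit}[l,j]$ assignment) precedes their view-change messages, then apply quorum intersection over the $2f+1$ collected view-change messages and the decide rule. Your explicit reconciliation of the index shift (the stored $\sigma_{in}$ validating at index $3$ rather than $4$, via Lemma~\ref{lem:provability2}) is a detail the paper's proof asserts without comment, so your treatment is if anything slightly more careful.
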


\begin{proof}

Let $p_l$ be the leader of view $j$.
Since the broadcast issued by $p_l$ completed before it was elected,
we get that at least $f+1$ honest parties $deliver_4$ a message
$\lr{v,\sigma}$ in $p_l$'s broadcast before sending their
``view-change'' massages.
Thus, since honest parties wait for $2f+1$ ``view-change'', every
honest party gets a ``view-change'' with $\lr{v,\sigma}$ s.t.\ 
\emph{threshold-validate}$(\lr{\lr{\lr{id,l,j},3},v}, \sigma) =
true$.
Therefore, by the code, all honest parties decide in view $j$.

\end{proof}

\begin{lemma}
\label{lem:probabilitytodecide}

Assume all honest parties start with externally valid values for
byzantine agreement, and consider a completed view $j$.
Then there is a
probability of at least $\frac{2}{3}$ that all honest parties decided in view
$j$.

\end{lemma}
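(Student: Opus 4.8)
The plan is to combine the two preceding lemmas with the randomness supplied by the leader election. First I would let $C$ denote the set of parties whose \BCname\ completed before the leader of view $j$ was elected. By Lemma~\ref{lem:2f1complete}, as soon as some honest party invokes $elect(\lr{id,j})$---which it does strictly before the leader is elected, since by our notation the leader is elected only when the first such invocation returns---at least $2f+1$ broadcasts have already completed; hence $|C| \geq 2f+1$. By Lemma~\ref{lem:alldecide}, if the elected leader $p_l$ lies in $C$, i.e.\ $p_l$'s broadcast completed before $p_l$ was elected, then all honest parties decide in view $j$. It therefore suffices to lower bound $\Pr[p_l \in C]$ by $2/3$.

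Next I would invoke the Validity and Unpredictability properties of the leader election primitive. Validity guarantees that $elect(\lr{id,j})$ returns each party with probability $1/n$, so the leader is uniformly distributed over $\Pi$. Since $n = 3f+1$ and $|C| \geq 2f+1$, a uniformly chosen leader that is independent of $C$ lands in $C$ with probability at least $(2f+1)/(3f+1) > 2/3$. Formally I would write $\Pr[p_l \in C] = \sum_{S} \Pr[C = S]\,\Pr[p_l \in S \mid C = S]$ and argue that, conditioned on the execution up to the instant of election (which already fixes $C$), the leader remains uniform over $\Pi$, so each inner term equals $|S|/n \geq (2f+1)/(3f+1)$.

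The main obstacle is justifying that $C$ and $p_l$ are (essentially) independent against an adaptive adversary that both schedules messages and decides which broadcasts complete. This is precisely where Unpredictability is needed: the set $C$ is determined by the execution strictly before the first $elect(\lr{id,j})$ invocation returns, and Unpredictability states that the adversary cannot predict the leader's identity before that point except with advantage negligibly above $1/n$. Consequently the adversary cannot steer the set of completed broadcasts away from the eventual leader, so the conditional distribution of $p_l$ given $C$ remains $1/n$ per party up to a negligible term. I would therefore conclude $\Pr[p_l \in C] \geq (2f+1)/(3f+1) > 2/3$, absorbing the negligible discrepancy into the ``except with negligible probability'' qualifier under which the whole protocol is analyzed, which yields the claimed bound and completes the proof.
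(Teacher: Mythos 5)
Your proof is correct and follows essentially the same route as the paper: invoke Lemma~\ref{lem:2f1complete} to get at least $2f+1$ completed broadcasts before the first $elect(\lr{id,j})$ invocation, invoke Lemma~\ref{lem:alldecide} to reduce the claim to the event that the elected leader's broadcast is among them, and use the Validity and Unpredictability properties of leader election to bound that event's probability by $\frac{2f+1}{3f+1} > \frac{2}{3}$. Your write-up merely spells out more explicitly the independence argument (that the set of completed broadcasts is fixed before the election outcome is predictable) which the paper compresses into a single sentence.
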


\begin{proof}

Since view $j$ is completed at least $f+1$ honest parties invoked
$elect(j)$, and let $p_l$ be the leader of view $j$.
By Lemma~\ref{lem:2f1complete}, at least $2f+1$ \BCname s have
completed before the first honest party invokes $elect(j)$.
Therefore, by the validity and unpredictability of the leader election
abstraction, we get that the \BCname\ issued by the leader $p_l$
completed before it was elected with a probability of
$\frac{2f+1}{3f+1} > \frac{2}{3}$.
Therefore, by Lemma~\ref{lem:alldecide}, all honest parties
decide in view $j$ with probability of at least $\frac{2}{3}$.

\end{proof}

\begin{observation}
\label{obs:norash}

An honest party cannot move to view $j+2$ before view $j$ was
completed.

\end{observation}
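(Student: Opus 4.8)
The plan is to argue purely by tracing the control flow of Algorithm~\ref{alg:BAoperation} and then applying a standard quorum-counting argument. Recall from the notation established above that an honest party \emph{moves} to view $j+1$ precisely when it executes line~\ref{line:endroundj} for the $j$-th time, and that view $j$ \emph{completes} exactly when at least $f+1$ honest parties have moved to view $j+1$. Hence it suffices to show that whenever an honest party $p$ moves to view $j+2$, at least $f+1$ honest parties have already moved to view $j+1$.

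First I would observe that in order to move to view $j+2$, party $p$ must complete the iteration of the \textbf{while} loop whose loop variable equals $j+1$; in that iteration, just before executing line~\ref{line:endroundj} for the $(j+1)$-th time, $p$ blocks on the instruction that waits for ``View-change'' messages tagged with view $j+1$ from $2f+1$ distinct parties. Consequently $p$ must have received $2f+1$ such messages before it moves to view $j+2$.

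Next I would invoke the honest-fraction argument: since at most $f$ of these $2f+1$ senders are faulty, at least $f+1$ of them are honest. The key step is then to justify that an honest party emits a ``View-change'' message carrying the tag $j+1$ only from within its own iteration with loop variable $j+1$, i.e.\ only after it has itself executed line~\ref{line:endroundj} for the $j$-th time and thereby moved to view $j+1$. This follows by inspection of the code: the only point at which an honest party sends a ``View-change'' message lies inside the body of the \textbf{while} loop, and it always stamps the message with the current value of the loop variable. Therefore each of these $f+1$ honest senders has already moved to view $j+1$.

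Putting the pieces together, at least $f+1$ honest parties moved to view $j+1$ before $p$ moved to view $j+2$, which is precisely the definition of view $j$ having completed. The only step requiring any care — and hence the main (though mild) obstacle — is this code-level claim that the tag on an honest party's ``View-change'' message coincides with the view it currently occupies, so that receiving such a tagged message is genuine evidence that its honest sender has already advanced past view $j$.
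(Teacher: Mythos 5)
Your proof is correct: the paper states this as an observation with no proof at all, treating it as immediate from the code, and your argument---that a party executes line~\ref{line:endroundj} for the $(j+1)$-th time only after collecting $2f+1$ view-$(j+1)$ ``View-change'' messages, of which at least $f+1$ come from honest parties, each of which sends such a message only from inside its own iteration for view $j+1$ and hence only after moving to view $j+1$---is exactly the intended justification. The one point you rightly flag as needing care, namely that the wait is for ``View-change'' messages tagged with the \emph{current} view, is the same reading the paper itself relies on (e.g., in the proof of Lemma~\ref{lem:f1causality}), so your proof fills the gap faithfully.
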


\begin{lemma}

Algorithms~\ref{alg:BAoperation} and~\ref{alg:BAmessages} satisfy
efficiency.

\end{lemma}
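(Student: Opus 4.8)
The plan is to show that the random variable $X$ counting the total number of messages sent by honest parties is \emph{probabilistically uniformly bounded} in the sense of the Uniformly Bounded Statistic definition, by multiplying a deterministic per-view message bound by a number of views that has a geometric tail. First I would pin down the per-view cost: as argued in the Communication complexity paragraph, in any single view the honest parties jointly send at most $R(k) = O(n^2)$ one-word messages. Concretely, each \BCname\ costs $O(n)$ messages (four \SBCname s, each at most $2n$ by PB-Linear-complexity), the $n$ concurrent \BCname s of the broadcast phase therefore cost $O(n^2)$, the all-to-all \emph{done}/\emph{skip-share}/\emph{skip} exchange and the leader election each cost $O(n^2)$, and the \emph{view-change} exchange costs $O(n^2)$. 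This bound is deterministic and holds for every view against every adversary, so if $Q$ denotes the number of views honest parties execute before they all decide (after which they stop sending protocol messages, since by Lemma~\ref{lem:probabilitytodecide} all honest parties decide together in a decisive view), then $X \le Q \cdot R(k)$.

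Next I would bound the tail of $Q$. By Lemma~\ref{lem:probabilitytodecide}, every completed view decides for all honest parties with probability at least $2/3$. The key structural fact is Observation~\ref{obs:norash}: a party reaches view $2l+1$ only if views $1,3,\dots,2l-1$ have all completed, so reaching that view forces at least $l$ completed views to have failed to produce a global decision. I would then argue that the per-view $2/3$ bound applies \emph{afresh} in each completed view, conditioned on the history: the leader of view $j$ comes from the leader-election coin, and its Unpredictability guarantees that, conditioned on everything known at the moment the first honest party invokes $elect(\lr{id,j})$, the elected leader is (up to a negligible bias) uniform over $\Pi$ and independent of which $2f+1$ \BCname s have completed in that view. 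Hence the events ``view $j$ fails to decide'' are, across completed views, each of conditional probability at most $1/3$, giving $\Pr[Q > 2l] \le (1/3)^l$ up to a negligible additive term.

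Finally I would assemble the two pieces: $\Pr[X > l \cdot 2R(k)] \le \Pr[Q > 2l] \le (1/3)^l + \epsilon(k)$, where $\epsilon(k)$ absorbs the negligible probability that a cryptographic primitive fails (a threshold-signature forgery or a successful coin prediction) — exactly the events that could otherwise invalidate Lemma~\ref{lem:probabilitytodecide} or the conditional independence of the elections. Taking $T(k) = 2R(k) = O(n^2)$ and $\delta(l) = (1/3)^l$, which is negligible in $l$, matches the Uniformly Bounded Statistic definition and establishes efficiency; combined with the CKPS~01 lemma this also yields the claimed expected $O(n^2)$ word communication.

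The step I expect to be the main obstacle is the chaining argument in the second paragraph. Lemma~\ref{lem:probabilitytodecide} is stated for a \emph{single} completed view in isolation, and promoting it to a genuine product bound $(1/3)^l$ requires showing that the adversary cannot, by observing the outcomes of earlier elections and the pattern of completed broadcasts, bias later elections below the $2/3$ threshold. This is precisely where the Unpredictability of the leader-election primitive (and the adaptively secure coin of Loss and Moran) must be invoked with care, and where the negligible slack $\epsilon(k)$ enters; by comparison the deterministic per-view accounting and the halting bookkeeping are routine.
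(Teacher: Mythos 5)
Your proposal is correct and follows essentially the same route as the paper's proof: a deterministic $O(n^2)$ per-view message bound, Observation~\ref{obs:norash} to translate ``too many messages'' into ``too many completed views without a decision,'' and Lemma~\ref{lem:probabilitytodecide} to obtain the geometric tail $(1/3)^l$, yielding probabilistic uniform boundedness. You are actually more careful than the paper on two points it glosses over --- the conditional (chained) application of the per-view $2/3$ bound across successive views and the factor-of-two bookkeeping coming from Observation~\ref{obs:norash} --- but these are refinements of the same argument rather than a different approach.
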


\begin{proof}

First note that since the number of messages sent by honest parties in
an \SBCname is at most $2n$, we get that every view has $O(n^2)$
messages sent by honest parties.
Denote this number by $T$, and let $X$ be the total number of messages
sent by honest parties in the protocol before all honest parties
decide.
By Observation~\ref{obs:norash}, $Pr[X > lT]$ is equal to the
probability that some honest parties do not decide in the first $l$
views.
By Lemma~\ref{lem:probabilitytodecide}, the probability of this is
less than $(\frac{1}{3})^{l}$, and the lemma follows.

\end{proof}

\section{Lower Bound on the number of messages for Asynchronous
Byzantine Agreement With Adaptive Adversary}
\label{app:LB}

A recent theorem of Abraham et al. \cite{ADDNR18} provides a lower
bound for \textit{Synchronous} Byzantine Agreement against a
\textit{strongly rushing adaptive adversary}:

\begin{theorem}[ADDNR \cite{ADDNR18}]\label{thm:ADDNR18}
If a protocol solves Synchronous Byzantine broadcast with $(1/2) + \epsilon$ probability against a strongly rushing adaptive adversary, 
then in expectation, honest parties collectively need to send at
least $(\epsilon f /2)^2$ messages. 
\end{theorem}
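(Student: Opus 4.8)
The plan is to prove the bound by contradiction, adapting the classical Dolev--Reischuk isolation argument to the randomized, strongly adaptive setting. Assume that the honest parties send, in expectation, fewer than $(\epsilon f/2)^2$ messages. Since the primitive is Byzantine \emph{broadcast}, I would fix a designated sender and draw its input uniformly from $\{0,1\}$; by validity, when the sender is honest every honest party must output the sender's bit. The adversary's goal will be to keep one honest party so poorly informed that its output is independent of the sender's input, forcing it to be correct with probability only $1/2$ and thereby contradicting the assumed $1/2 + \epsilon$ success probability. The crucial power I would exploit is that a strongly rushing adaptive adversary performs \emph{after-the-fact removal}: in each round it sees which messages honest parties are about to send, and it may corrupt a sender before delivery and delete the message it was about to transmit.

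The adversary strategy I would use is the following. Fix a set $S$ of $f/2$ candidate parties, all distinct from the sender, and pick a target $p \in S$ uniformly at random. Throughout the execution, whenever an honest party is about to send a message to $p$, the adversary corrupts that party just before delivery and suppresses the message; no other corruptions are made. Let $d_p$ denote the number of \emph{distinct} parties that ever attempt to send to $p$ in this execution. The adversary needs exactly $d_p$ corruptions, so the isolation succeeds completely (the target $p$ receives nothing and remains honest) precisely when $d_p \le f$, i.e.\ within budget.

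The counting step is where the quadratic threshold appears. Writing $M$ for the expected number of messages honest parties send, a random target $p \in S$ satisfies $\mathbb{E}[d_p] \le M/(f/2)$ by linearity of expectation, since the in-counts over the $f/2$ candidates sum to at most the total message count. Markov's inequality then gives
\[
\Pr[d_p > f] \;\le\; \frac{\mathbb{E}[d_p]}{f} \;\le\; \frac{M}{(f/2)\,f} \;<\; \frac{(\epsilon f/2)^2}{f^2/2} \;=\; \frac{\epsilon^2}{2}.
\]
Conditioned on the isolation succeeding, $p$'s transcript is independent of the sender's bit, so $p$ outputs correctly with probability exactly $1/2$; hence $p$ is correct with total probability at most $\tfrac12 + \tfrac{\epsilon^2}{2} \le \tfrac12 + \tfrac{\epsilon}{2} < \tfrac12 + \epsilon$ (using $\epsilon \le 1$), contradicting that all honest parties decide correctly with probability at least $1/2 + \epsilon$.

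The main obstacle I anticipate is the circularity in the counting step: the execution, and hence both the total message count and $d_p$, depends on the randomly chosen target and on the adversary's adaptive corruptions, so the $M$ measured in the isolation experiment is not a priori the protocol's stated message complexity. I would resolve this by setting up a single probability space in which the target is chosen by independent coins, and arguing by a coupling/indistinguishability argument that, up to the moment isolation would fail, the honest parties behave exactly as they do against an adversary that merely drops all messages addressed to $p$; this lets me charge the expected in-degree of the random target against the protocol's expected message complexity and legitimately apply the averaging above. Care is also needed to ensure the candidate set $S$ avoids the sender and that the budget accounting never exceeds $f$.
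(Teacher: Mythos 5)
First, a framing point: the paper itself does not prove this statement. Theorem~\ref{thm:ADDNR18} is imported verbatim from Abraham et al.~\cite{ADDNR18} and used as a black box; the only argument in Appendix~\ref{app:LB} is the proof of the \emph{subsequent} theorem, which transfers the synchronous bound to the asynchronous setting by showing that an asynchronous adaptive adversary can simulate a strongly rushing synchronous one (it may withhold forever the messages of parties it corrupts). So your proposal is an attempt to reprove the cited external result, and must be judged on its own merits rather than against a proof in this paper.

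On those merits there is a genuine gap, at the contradiction step. Your adversary corrupts \emph{every} party that attempts to send to the target $p$ --- and nothing exempts the designated sender, which in essentially any protocol will attempt to send something to $p$ (under your strategy it is then corrupted in the very first round; choosing the candidate set $S$ disjoint from the sender only ensures the sender is not the \emph{target}, not that it stays honest). For Byzantine broadcast, validity binds only while the sender is honest; once the sender is corrupted, the specification requires only agreement among the remaining honest parties, not that anyone output the sender's bit $b$. Hence the event you bound, ``$p$ outputs $b$'', is not the protocol's success event: in your attack execution the protocol can be perfectly correct (agreement holds, validity is vacuous) even though $p$'s output is independent of $b$. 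Concluding that $p$ is ``correct'' with probability at most $1/2+\epsilon^2/2 < 1/2+\epsilon$ therefore contradicts nothing. Nor is this a removable technicality: if you repair the attack by coupling with an execution in which only $p$ is corrupted (and deaf), so that the sender remains honest and validity forces the \emph{other} honest parties to output $b$, what you extract is a disagreement event of probability only about $\epsilon$ (the others output $b$ with probability at least $1/2+\epsilon$ minus the coupling loss, while $p$ matches $b$ with probability at most $1/2$), which for small $\epsilon$ is nowhere near the failure probability exceeding $1/2-\epsilon$ that a contradiction requires.

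A second, independent problem is the fix you propose for the counting circularity. The reference adversary that ``merely drops all messages addressed to $p$'' does not exist in this model: suppressing a message between two honest parties is exactly what costs a corruption (that is the entire content of ``after-the-fact removal''), so your coupling is to an execution no adversary can realize, and the averaging bound on the expectation of $d_p$ is taken across executions that change with the choice of target. The natural way to make the averaging sound is to have the adversary corrupt the whole candidate set $S$ and run its members honestly-but-deaf, so that every candidate target shares one common reference execution; this idea is absent from your proposal. Handling these two issues --- keeping validity in force while isolating the target, and charging all targets' in-degrees to a single execution --- is precisely what makes the lower-bound proof in \cite{ADDNR18} delicate, and as written your argument does not establish the theorem.
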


The \emph{strongly rushing adaptive adversary} assumed in Abraham et
al.~\cite{ADDNR18} can adaptively decide which $f$ parties to corrupt
and when to corrupt them.
In particular, the adversary is allowed to decide to corrupt a party
$p$ after observing the messages sent by $p$ in round $r$.
In addition, after corrupting $p$, the adversary can remove $p$'s
round-$r$ messages from the network before they reach other honest
parties\footnote{In comparison, a standard adaptive adversary (in the synchronous
model) cannot ``take back'' or remove $p$'s round-$r$ messages to
other honest parties.}.

Note that while Theorem \ref{thm:ADDNR18} is proven for Byzantine
broadcast, it immediately induces a lower bound for binary Byzantine
agreement.
We now use it to prove the following lower bound

\begin{theorem}
If a protocol solves binary Byzantine
agreement with $(1/2) + \epsilon$ probability against an asynchronous adaptive adversary, then in
expectation, honest parties collectively need to send at least $(\epsilon f /2)^2$ messages. 
\end{theorem}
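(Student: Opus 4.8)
The plan is to reduce the asynchronous binary agreement lower bound to the synchronous one of Theorem~\ref{thm:ADDNR18} by showing that, from the protocol's point of view, an asynchronous adaptive adversary is at least as powerful as a synchronous \emph{strongly rushing} adaptive adversary. Let $P$ be any protocol that solves binary Byzantine agreement with probability $(1/2)+\epsilon$ against an asynchronous adaptive adversary. First I would invoke the observation already recorded in the excerpt that Theorem~\ref{thm:ADDNR18}, although phrased for broadcast, induces the corresponding $(\epsilon f/2)^2$ message lower bound for \emph{synchronous} binary agreement against a strongly rushing adaptive adversary: broadcast reduces to agreement by having the designated sender forward its bit to all parties and then running agreement on the received values, so (by validity of agreement when the sender is honest) an agreement protocol beating the bound would beat it for broadcast as well. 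I take this as the source of a synchronous binary-agreement lower bound with the stated constant.

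The heart of the argument is a simulation: for every strongly rushing synchronous adaptive adversary $A$, I construct an asynchronous adaptive adversary $A'$ that runs the same protocol $P$ and induces \emph{identical} honest-party views and \emph{identical} honest message counts. I run $P$ under a round-structured (lock-step) schedule in which $A'$ buffers all messages sent in a round inside the network; only after inspecting them does it decide which parties to corrupt and which of the buffered round-$r$ messages to release, delivering the surviving ones before any round-$(r+1)$ message. This reproduces the rushing capability for free, since controlling the asynchronous network already lets $A'$ read honest messages and compute the corrupted parties' replies before delivery. The delicate point, and the reason the reduction targets the strongly rushing model specifically, is the message-\emph{removal} capability: when $A$ corrupts a party $p$ after observing $p$'s round-$r$ messages and erases them, $A'$ simply declines ever to deliver those in-flight messages, which an asynchronous adversary may do precisely because it controls delivery. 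A standard synchronous adaptive adversary cannot do this (as noted in the excerpt's footnote), so the reduction would fail against that weaker class.

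Given the simulation, I conclude by comparing expectations along the induced coupling. Because honest parties following $P$ behave identically under $A$ (synchronous) and under the simulating $A'$ (asynchronous), the agreement success probability is the same under both, namely at least $(1/2)+\epsilon$, and the expected number of messages sent by honest parties coincides for the pair $(A,A')$. Hence $P$, run synchronously, solves synchronous binary agreement against every strongly rushing adaptive adversary with probability at least $(1/2)+\epsilon$, so the synchronous lower bound applies to it. Taking suprema over the two adversary classes and using that communication is measured as the maximum over adversaries of the expected message count, I obtain
\[
\sup_{A'}\mathbb{E}\bigl[\mathrm{msgs}(P,A')\bigr]
\;\ge\; \sup_{A}\mathbb{E}\bigl[\mathrm{msgs}(P,A)\bigr]
\;\ge\; (\epsilon f/2)^2,
\]
where the first supremum ranges over asynchronous adaptive adversaries, the second over strongly rushing synchronous ones, and the last inequality is the synchronous binary-agreement bound from Theorem~\ref{thm:ADDNR18}. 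This gives the claimed $(\epsilon f/2)^2$ expected message lower bound in the asynchronous model.

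I expect the main obstacle to be making the simulation fully rigorous rather than the concluding arithmetic: one must specify the round-structured asynchronous schedule for an \emph{a priori} unstructured asynchronous protocol, verify that the induced honest transcripts agree round-by-round so that corruptions triggered by round-$r$ content are matched exactly, and confirm that withholding a newly corrupted party's in-flight messages remains within the asynchronous adversary's delivery powers while keeping the total number of corruptions at most $f$.
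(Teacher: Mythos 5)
Your proposal is correct and follows essentially the same route as the paper: both reduce to Theorem~\ref{thm:ADDNR18} by having the asynchronous adaptive adversary simulate a synchronous strongly rushing one, using its control over message delivery to emulate rushing and, crucially, to never deliver the in-flight round-$r$ messages of a party it corrupts after observing them. The one point you leave ``to confirm''---that withholding those messages forever is legitimate while the protocol remains obligated to be correct---is exactly what the paper settles by appealing to the (strong) Termination definition: messages sent by a party that is ever corrupted, even messages sent before the corruption, need never be delivered, which is also why this lower bound applies to termination but not to weak termination.
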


\begin{proof}

If a protocol sends at most $(\epsilon f /2)^2$ messages against any asynchronous adaptive adversary
then it will clearly send at most  $(\epsilon f /2)^2$ messages against any restricted adversary 
that works in the synchronous model of communication (where any message sent in round $r$ by an honest party
must arrive by the end of round $r$).

The core observation is that the asynchronous adaptive adversary has the ability to
not deliver messages of parties it corrupts. This follows from the 
definition of Termination - the protocol must terminate
even if some messages sent by parties that are corrupted
will never be delivered. This is true even if some of these messages have been sent
before the adversary decided to corrupt this party.

Hence even when we restrict the asynchronous adaptive adversary to a synchronous message passing model
the adaptive adversary can still essentially remove messages of a party $p$ sent in round $r$
where $r$ is the round that the adversary decided to corrupt party $p$.
Therefore the adaptive asynchronous adversary can fully simulate the synchronous strongly rushing adaptive adversary.
In particular, the adaptive asynchronous adversary can simulate the behaviour as in Theorem \ref{thm:ADDNR18} and cause 
the protocol to have an $1/2-\epsilon$ error probability.
\end{proof}

\end{appendices}

\newpage
\bibliographystyle{plain}
\bibliography{bibliography}

\end{document}